\documentclass[journal,draftcls,onecolumn,12pt,twoside]{IEEEtran}

\usepackage[english]{babel}
\usepackage{amsmath}
\usepackage{amssymb}
\usepackage{amsthm}
\usepackage{amstext}
\usepackage{graphicx}
\usepackage{float}
\usepackage{booktabs}
\usepackage{microtype}
\usepackage{algorithm}
\usepackage[noend]{algorithmic}
\usepackage{url}
\usepackage{cite}
\usepackage{color}
\usepackage{xcolor}
\usepackage{bm}
\UseRawInputEncoding
\addtocounter{MaxMatrixCols}{10}

\newcommand{\eq}{\mathsf{\,=\,}}

\newtheorem{theorem}{Theorem}

\newtheorem{lemma}[theorem]{Lemma}
\newtheorem{corollary}[theorem]{Corollary}

\newtheorem{example}{Example}

\long\def\symbolfootnote[#1]#2{\begingroup
\def\thefootnote{\fnsymbol{footnote}}\footnote[#1]{#2}\endgroup}
\IEEEoverridecommandlockouts

\title{On MDS Condition and Erased Lines Recovery of Generalized Expanded-Blaum-Roth Codes
and Generalized Blaum-Roth Codes}

\author{Hanxu Hou,~\IEEEmembership{Member,~IEEE}, and Mario
Blaum,~\IEEEmembership{Life Fellow,~IEEE}
\thanks{
H. Hou is with the School of Electrical Engineering \&
Intelligentization, Dongguan University of Technology~(E-mail:
houhanxu@163.com).
M. Blaum is with Group of Analysis, Security and Systems (GASS),
%Department of Software Engineering and Artificial Intelligence,
Faculty of Computer Science and Engineering,
Universidad Complutense de Madrid (UCM),
%Calle Profesor Jos? Garc?a Santesmases, 9,
Ciudad Universitaria, 28040 Madrid, Spain (E-mail: mblaum@ieee.org).
This work was partially supported by the National Key R\&D Program of
China (No. 2020YFA0712300), the National Natural Science Foundation of China (No. 62071121),
the Key Area Research and Development Program of Guangdong Province (2020B0101110003) and Basic Research Enhancement Program of China under Grant 2021-JCJQ-JJ-0483.}}

\begin{document}

%\bibliographystyle{plain}

%\markboth{IEEE Transactions on Information Theory}%
%{Submitted paper}

\maketitle

\begin{abstract}
Generalized Expanded-Blaum-Roth (GEBR) codes \cite{wu2020} are designed for large-scale distributed
storage systems that have larger recoverability for single-symbol failures, multi-column failures
and multi-row failures, compared with locally recoverable codes (LRC).
GEBR codes encode an $\alpha\times k$
information array into a $p\tau\times (k+r)$ array such that lines of slope
$i$ with $0\leq i\leq r-1$ have even parity and each column contains $p\tau-\alpha$ local
parity symbols, where $p$ is an odd prime and $k+r\leq p\tau$.
Necessary and sufficient conditions for GEBR codes to be $(n,k)$ recoverable
(i.e., any $k$ out of $n=k+r$ columns can retrieve all information symbols) are given in \cite{hou2021generalization}
for $\alpha=(p-1)\tau$. However, the $(n,k)$ recoverable condition of GEBR codes is unknown
when $\alpha<(p-1)\tau$. In this paper, we present the $(n,k)$ recoverable condition for GEBR codes for
$\alpha< (p-1)\tau$.
In addition, we present a sufficient condition for enabling GEBR codes
to recover some erased lines of any slope $i$ ($0\leq i\leq p\tau-1$) for
any parameter $r$ when $\tau$ is a power of $p$.
Moreover, we present the construction of Generalized Blaum-Roth (GBR) codes that
encode an $\alpha\times k$ information array into an $\alpha\times (k+r)$ array.
We show that GBR codes share the same MDS condition as the $(n,k)$ recoverable condition of GEBR
codes, and we
also present a sufficient condition for GBR codes to recover some erased lines
of any slope $i$ ($0\leq i\leq \alpha-1$).
\end{abstract}

\begin{IEEEkeywords}
Array codes, expanded Blaum-Roth codes, Blaum-Roth codes, MDS condition, erased lines.
\end{IEEEkeywords}

\section{Introduction}

Modern distributed storage systems usually employ erasure coding to maintain data
availability and durability in the presence of failures that can deliver high
data reliability with a small storage overhead.
Array codes are a class of erasure-correcting codes with only XOR and
cyclic-shift operations being involved in the coding process. They
have been widely used in storage systems, such as Redundant Arrays of
Independent Disks (RAID) \cite{Patterson1989}. An $m\times (k+r)$ array code
is encoded from $m\times k$ information symbols. % that is stored in $k+r$
%columns each with $m$ symbols.
{\em Maximum distance separable (MDS)} array codes \cite{Blaum1995,Corbett2004Ro,4358290,hou2014,hou2021star}
are a special class of
array codes, where any $k$ out of the $k+r$ columns can retrieve all
$m\times k$ information symbols.
	
In large-scale distributed storage systems, the data files
are often geographically distributed across nodes, racks and data
centers.
%It is necessary to access
The data files may need to be accessed even if some nodes, racks or
data centers are temporarily off-line.
Therefore, it it important to design array codes optimizing their rate for a given
storage overhead in large-scale storage systems %to optimize the storage efficient
such that both single-symbol failures and correlated multi-symbol failures can be
efficiently recovered.
Locally recoverable codes (LRC) \cite{2014A} %are such array codes that
have local repair for one or more symbol failures by adding parity
symbols for (generally disjoint) groups of
symbols. However, although optimal LRC codes maximize the minimum
distance of the codes, they are not designed to deal with erasures of
whole groups (which in general are columns in the aforementioned
arrays).
%if more symbols are erased in a group, LRC codes cannot locally recover
%the erased symbols. Moreover, there is no fast decoding algorithm designed for LRC
%when the correlated multi-symbols are erased.
Recently, $p\times (k+r)$ Expanded Blaum-Roth (EBR) \cite{Blaum2019,BR2019} codes were proposed
that are {\em $(n,k)$ recoverable} (i.e., any $k$ out of the $n=k+r$ columns can
retrieve all information symbols) and can locally repair any single-symbol failure
by accessing the symbols in a line of a certain slope, where $p$ is an odd prime.
Note that $(n,k)$ recoverable is reduced to MDS property if the number of information symbols
is $k$ times the number of symbols in a column.
%Each column in an EBR code has size $p$, where $p$ is an odd prime.
Generalized Expanded-Blaum-Roth (GEBR) codes \cite{wu2020,hou2021generalization} provide
a more general construction
such that the number of symbols in a column is not
required to be a prime number.

\subsection{Related Works}
EVENODD \cite{Blaum1995} and RDP \cite{Corbett2004Ro} are two systematic MDS array codes
correcting double column failures.
STAR codes \cite{4358290} extend the construction of EVENODD that can correct three column failures,
while generalized EVENODD \cite{blaum1996mds,blaum2002evenodd} is a more general construction of EVENODD with
three or more parity columns.
RDP are extended to be RTP \cite{RTP12} with three parity columns and generalized
RDP \cite{blaum2006family} with three or more parity columns.
Blaum-Roth (BR) codes \cite{Blaum1993} are constructed over the binary cyclic ring
$\mathbb{F}_2[x]/(1+x+\ldots+x^{p-1})$ by designing the parity-check
matrix to be a Vandermonde matrix that can support any parameters $k$ and $r$.
There are other studies that present new constructions of MDS array codes
\cite{746809,hou2014,Hou2018,7266845,hou2021star,feng2005} and
efficient decoding methods \cite{Hou2018form,hou2018a}.

In this paper, we propose Generalized Blaum-Roth (GBR) codes that
can support more parameters than that of BR codes, and show both the MDS condition
and the condition of recovering some erased lines
of any slope.

\subsection{Contributions}

In this paper, we study necessary and sufficient conditions for GEBR
codes to be $(n,k)$ recoverable and we also present sufficient conditions
for the recovery of some erased lines. Specifically,
our main contributions are as follows:
\begin{enumerate}
\item
%We give the MDS condition of GEBR codes for $\alpha\leq (p-1)\tau$.
When $\alpha\leq (p-1)\tau$, we show that $p\tau\times (k+r)$ GEBR codes %with $g(x)=1$
satisfy the $(n,k)$ recoverable property if $k+r\leq p^{\nu+1}$, where $\tau=\gamma p^{\nu}$ with
$\gcd(\gamma,p)=1$. Otherwise, if $k+r> p^{\nu+1}$ GEBR codes are not $(n,k)$ recoverable codes
under a specific condition (see Theorem \ref{thm:mds3} for details). Specifically, when $\gamma$ is an even number and
$p$ is a prime number with 2 being a primitive element in $\mathbb{Z}_p$,
we characterize the necessary and sufficient $(n,k)$ recoverable condition for GEBR codes
(see Theorem \ref{theo:g(x)} and Corollary \ref{coro:g(x)} for details).
\item
We present a sufficient condition for recovering some erased lines in
GEBR codes of any slope
$i$, $0\leq i\leq p\tau-1$, when $\tau$ is a power of $p$ (see Theorem \ref{thm:reco-tau1} for details).
The lines of slope $i$ are taken toroidally and an erased line of slope
$i$ means that the $k+r$ symbols in a line  of slope $i$ are erased.
\item
We propose GBR codes that extend the construction of BR codes to support more
parameters.
We show the necessary and sufficient MDS condition (see Theorem \ref{thm:mds4} for details)
and present a sufficient condition to recover some erased lines of any slope
(see Theorem \ref{thm:reco-tau2} for details).
\end{enumerate}

\subsection{Paper Organization}
The rest of the paper is organized as follows. In Section \ref{sec:cons}, we
review the construction of GEBR codes in \cite{hou2021generalization} and present the construction
of GBR codes. In Section \ref{sec:mds}, we show the $(n,k)$ recoverable condition for both GEBR codes
and GBR codes. In Section \ref{sec:lines}, we present two sufficient conditions of
recovering some erased lines for GEBR codes and GBR codes, respectively.
We conclude the paper in Section \ref{sec:con}.

\section{Constructions of GEBR Codes And GBR Codes}
\label{sec:cons}
In the section, we first review the construction of GEBR codes \cite{hou2021generalization}
and then present the construction of GBR codes.

\subsection{GEBR Codes}
A GEBR code \cite{hou2021generalization} consists of $m\times (k+r)$ arrays,
where each element in an array is in the finite field
$\mathbb{F}_{q}$, $q$ is a power of 2, $m=p\tau\geq k+r$ and $p$ is an
odd prime number. %, $\tau=\gamma p^{\nu}$ and $\gcd(\gamma,p)=1$.
It stores $\alpha\times k$ information symbols in the $k$
information columns with $\alpha$ information symbols each, for some
$\alpha\leq (p-1)\tau$, and uses the $\alpha\times k$ sub-array of information symbols as input
for encoding.

Given an $m\times (k+r)$ array, denote by
$s_{i,j}\in \mathbb{F}_{q}$ the element in
row $i$ and column $j$, where $i=0,1,\ldots,m-1$ and
$j=0,1,\ldots,k+r-1$. We assume that the
$\alpha k$ information symbols are $s_{i,j}$ with $i=0,1,\ldots,\alpha-1$ and
$j=0,1,\ldots,k-1$, while the remaining $m(k+r)-\alpha k$ symbols are
parity symbols.

For $j=0,1,\ldots,k+r-1$, we represent the $m$ symbols stored in column $j$
%(i.e., $s_{0,j},s_{1,j},\ldots,s_{m-1,j}$)
by a polynomial $s_j(x)$ of degree at most
$m -1$ over the ring $\mathbb{F}_{q}[x]$, i.e.,
\[
s_j(x)=s_{0,j}+s_{1,j}x+s_{2,j}x^2+\cdots+s_{m-1,j}x^{m-1},
\]
where $s_j(x)$ with $j=0,1,\ldots,k-1$ is an {\em information polynomial} and
$s_j(x)$ with $j=k,k+1,\ldots,k+r-1$ is a {\em parity polynomial}.  Let
$\mathcal{R}_{m}(q)=\mathbb{F}_{q}[x]/(1+x^{m})$ be the ring of polynomials
modulo $1+x^{m}$ with coefficients in $\mathbb{F}_{q}$.

Given the $\alpha$
information symbols $s_{0,j},s_{1,j},\ldots,s_{\alpha-1,j}$ in column
$j$, we compute
$m-\alpha$ symbols $s_{\alpha,j},s_{\alpha+1,j},\ldots,s_{m-1,j}$ for local
repair such that the polynomial $s_j(x)$ is a multiple of $(1+x^{\tau})g(x)$, where
$g(x)$ divides $1+x^{\tau}+\cdots+x^{(p-1)\tau}$.
%and $\gcd (g(x), 1+x^{\tau})=1$.
%We also require that $\gcd (g(x), h(x))=1$ and $\gcd (1+x^\tau, h(x))=1$.
Denoting by $\mathcal{C}_{p\tau}(g(x),\tau,q,d)$ the cyclic code of length $m=p\tau$ over
$\mathbb{F}_{q}$ with generator polynomial $(1+x^\tau)g(x)$ and minimum
distance $d$, then the information
polynomial $s_j(x)$ is in $\mathcal{C}_{p\tau}(g(x),\tau,q,d)$.
The coefficients of the polynomial
$s_{j}(x)=\sum_{\ell=0}^{m-1}s_{\ell,j}x^{\ell}$ in $\mathcal{C}_{p\tau}(g(x),\tau,q,d)$
must satisfy
\begin{equation}
\sum_{\ell=0}^{p-1}s_{\ell\tau+\mu,j}=0
\label{eq:tau-eqn}
\end{equation}
for all $\mu=0,1,\ldots,\tau-1$ \cite[Lemma 3]{hou2021generalization}.

The relationship between
the information polynomials and the parity polynomials is given by

\[
\mathbf{H}_{r\times (k+r)}\cdot\begin{bmatrix}
s_0(x) & s_1(x) & \cdots & s_{k+r-1}(x)
\end{bmatrix}^T=\mathbf{0}^T,
\]
where $\mathbf{H}_{r\times (k+r)}$ is the $r\times (k+r)$ parity-check matrix
\begin{equation}
\mathbf{H}_{r\times (k+r)}=
\begin{bmatrix}
 1&1&1 & \cdots & 1\\
 1 & x & x^{2}& \cdots & x^{k+r-1}\\
 \vdots& \vdots& \vdots & \ddots  & \vdots\\
 1& x^{r-1}& x^{2(r-1)}& \cdots & x^{(r-1)(k+r-1)}\\
 \end{bmatrix},
\label{eq:matrixH2}
\end{equation}
and $\mathbf{0}^T$ is an all-zero column of length $r$.  After solving the
above linear equations, the $r$ parity
polynomials are also in $\mathcal{C}_{p\tau}(g(x),\tau,q,d)$.  The
resulting codes with the parity-check matrix given
by Eq.~\eqref{eq:matrixH2} are denoted by $\textsf{GEBR}(p,\tau,k,r,q,g(x))$.

Let  $1+x^\tau+\cdots+x^{(p-1)\tau}=g(x)h(x)$. Then
the polynomial $h(x)$ is called the \emph{parity-check polynomial} of
$\mathcal{C}_{p\tau}(g(x),\tau,q,d)$, since the multiplication of any polynomial in
$\mathcal{C}_{p\tau}(g(x),\tau,q,d)$ and $h(x)$ is zero.
When $\gcd (g(x), h(x))=1$ and $\gcd (1+x^\tau, h(x))=1$,
the ring $\mathcal{R}_{p\tau}(q)$ is isomorphic to
$\mathbb{F}_{q}[x]/g(x)(1+x^\tau)\times \mathbb{F}_{q}[x]/(h(x))$
\cite[Theorem 1]{hou2021generalization}.

The ring $\mathcal{C}_{p\tau}(g(x),\tau,q,d)$ with $g(x)=1$ and $q=2$ has been used
to construct codes \cite{Hou2017,HOU2019,hou2019new} to have efficient repair.
The special ring with $\tau=1$ is employed to construct binary MDS
array codes \cite{Blaum1993,blaum1996mds,hou2014,hou2016,hou2018a,Hou2018form,BR2019}
that have lower computational complexity.

\subsection{GBR Codes}
A Generalized Blaum-Roth (GBR) code consists of $\alpha\times (k+r)$
arrays of symbols in $\mathbb{F}_{q}$, where $\alpha=\deg(h(x))=(p-1)\tau-\deg(g(x))$.
Recall that $h(x)$ is the parity-check polynomial of $\mathcal{C}_{p\tau}(g(x),\tau,q,d)$.
The first $k$ columns in an array store $\alpha k$ information symbols and the last
$r$ columns store $\alpha r$ parity symbols.
By representing the $\alpha$ symbols in each column of the $\alpha\times (k+r)$ array as a
polynomial in the ring $\mathbb{F}_{q}[x]/(h(x))$, we can obtain the
$r$ parity polynomials from the $k$ information polynomials
using the $r\times (k+r)$ parity-check matrix in Eq.
\eqref{eq:matrixH2}, where in this case
the operations are over the ring $\mathbb{F}_{q}[x]/(h(x))$.
We denote the array codes defined above by $\textsf{GBR}(p,\tau,k,r,q,g(x))$.
BR codes \cite{Blaum1993} are the special case of $\textsf{GBR}(p,\tau,k,r,q,g(x))$
codes in which $g(x)=1$ and $\tau=1$.

Both GEBR codes and GBR codes have the same parity-check matrix, the difference is that
GEBR codes operate over $\mathcal{C}_{p\tau}(g(x),\tau,q,d)$, while GBR codes
operate over $\mathbb{F}_{q}[x]/(h(x))$. When $\gcd(g(x),h(x))=\gcd(1+x^{\tau},h(x))=1$,
the ring $\mathcal{C}_{p\tau}(g(x),\tau,q,d)$
is isomorphic to $\mathbb{F}_{q}[x]/(h(x))$ \cite[Lemma 2]{hou2021generalization},
and we can obtain that the two codes share the same $(n,k)$ recoverable condition.
In GEBR codes, each column stores $p\tau$ symbols, i.e.,
there are $p\tau-\alpha$ local parity symbols in each column that can locally
repair up to $d-1$ symbol failures, while in GBR codes, each column stores $\alpha$ symbols
and there are no local parity symbols for local repair.
According to the parity-check matrix in Eq. \eqref{eq:matrixH2}, the summation of the $k+r$ symbols in
each line along slope $i$ in the $p\tau\times (k+r)$ array of GEBR codes is zero, where $0\leq i\leq r-1$.
If we add the all zero array of size $(p\tau-\alpha)\times (k+r)$ at the bottom of
an $\alpha\times (k+r)$ array in a GBR code to obtain a $p\tau\times (k+r)$ array,
then each line of slope $i$ of the obtained $p\tau\times (k+r)$ array has even parity
for $0\leq i\leq r-1$.

\begin{example}
Consider $g(x)=1$, $p=\tau=3$, $k=7$, $r=2$ and $\alpha=6$.
According to Eq. \eqref{eq:matrixH2}, the $2\times 9$ parity-check matrix of the example is
\[
\begin{bmatrix}
1 &1 &1 &1 &1 &1 &1 &1 &1 \\
1 &x &x^2 &x^3 &x^4 &x^5 &x^6 &x^7 &x^8 \\
\end{bmatrix}.
\]
Therefore, we have
\begin{align*}
s_7(x)+s_8(x)=&s_0(x)+s_1(x)+s_2(x)+s_3(x)+s_4(x)+s_5(x)+s_6(x),\\
x^7s_7(x)+x^8s_8(x)=&s_0(x)+xs_1(x)+x^2s_2(x)+x^3s_3(x)+x^4s_4(x)+x^5s_5(x)+x^6s_6(x).
\end{align*}
Suppose that the $k=7$ information polynomials are
\[
(s_0(x),s_1(x),s_2(x),s_3(x),s_4(x),s_5(x),s_6(x))=(1,x,x^2,x^3,x^4,x^5,1),
\]
we have
\begin{align*}
s_7(x)+s_8(x)=&x+x^2+x^3+x^4+x^5\bmod (1+x^3+x^6),\\
x^7s_7(x)+x^8s_8(x)=&1+x+x^4+x^5\bmod (1+x^3+x^6),
\end{align*}
and further obtain that
\[
(x+x^2+x^4+x^5)s_8(x)=x^3+x^4\bmod (1+x^3+x^6).
\]
Since the inverse of $x+x^2+x^4+x^5$ over $\mathbb{F}_2[x]/(1+x^3+x^6)$ is
$1+x+x^3+x^4+x^5$, we can compute that
\[
s_8(x)=(1+x+x^3+x^4+x^5)(x^3+x^4)=x^5\bmod (1+x^3+x^6),
\]
and further compute that
\[
s_7(x)=s_8(x)+\sum_{i=0}^{6}s_i(x)=x+x^2+x^3+x^4.
\]
Table \ref{table:example1} shows the example of $\textsf{GBR}(3,3,7,2,2,1)$.
We can check that each line of both slope zero and slope one of the $9\times 9$ array
in Table \ref{table:example1} has even parity.

\begin{table}[tbh]
\caption{Example of $\textsf{GBR}(3,3,7,2,2,1)$.}
\begin{center}
\begin{tabular}{|c|c|c|c|c|c|c|c|c|} \hline
% Column 0 &Column 1 & Column 2 & Column 3 & Column 4& Column 5& Column 6& Column 7& Column 8   \\ \hline \hline
 1  &0 &0 &0 & 0&0 &1 &0&0  \\ \hline
 0  &1 &0 &0 & 0&0 &0 &1&0\\ \hline
 0  &0 &1 &0 & 0&0 &0 &1&0\\ \hline
 0  &0 &0 &1 & 0&0 &0 &1&0\\ \hline
 0  &0 &0 &0 & 1&0 &0 &1&0\\ \hline
 0  &0 &0 &0 & 0&1 &0 &0&1\\ \hline \hline
 0  &0 &0 &0 & 0&0 &0 &0&0\\ \hline
 0  &0 &0 &0 & 0&0 &0 &0&0\\ \hline
 0  &0 &0 &0 & 0&0 &0 &0&0\\ \hline
\end{tabular}
\end{center}
\label{table:example1}
%\vspace{-0.5cm}
\end{table}
\end{example}

\section{The $(n,k)$ Recoverable Property}
\label{sec:mds}
When we say that an array code is $(n,k)$ recoverable or satisfies the $(n,k)$ recoverable property, it means that
the code can recover up to any $r$ erased columns out of the $k+r$ columns.
According to \cite[Lemma 7]{hou2021generalization},
$\textsf{GEBR}(p,\tau,k,r,q,g(x))$ codes are
$(n,k)$ recoverable if and only if the equation
\begin{equation}
(1+x^i)s(x)=0\bmod (1+x^{p\tau})
\label{eq:mds-cond}
\end{equation}
has a unique solution $s(x)=0\in\mathcal{C}_{p\tau}(g(x),\tau,q,d)$ for
$i\in \{1,2,\ldots,k+r-1\}$.

The following theorem states the $(n,k)$ recoverable
conditions of $\textsf{GEBR}(p,\tau,k,r,q,g(x))$ codes.

\begin{theorem}
Let $\tau=\gamma p^{\nu}$,
where $\nu\geq 0$, $0< \gamma$ and $\gcd (\gamma,p)=1$. Then,

\begin{description}

\item[(i)] If
$k+r\leq p^{\nu+1}$, then the codes
$\textsf{GEBR}(p,\tau,k,r,q,g(x))$ are $(n,k)$ recoverable.

\item[(ii)] If
$k+r>p^{\nu+1}$ and
$(1+x^{p^\nu}+(x^{p^\nu})^2+\cdots+(x^{p^\nu})^{p-1})\nmid g(x)$,
then the codes $\textsf{GEBR}(p,\tau,k,r,q,g(x))$ are not $(n,k)$ recoverable.

\end{description}

\label{thm:mds3}
\end{theorem}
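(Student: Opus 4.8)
The plan is to recast the criterion of \cite[Lemma~7]{hou2021generalization} as a statement about a polynomial gcd, and then to analyze that gcd through the orders of roots of unity in characteristic $2$. Let $h(x)$ be the parity-check polynomial, so that $(1+x^{\tau})g(x)h(x)=1+x^{p\tau}$ and $\deg h(x)=(p-1)\tau-\deg g(x)$. Setting $G(x)=(1+x^{\tau})g(x)$, every $s(x)\in\mathcal{C}_{p\tau}(g(x),\tau,q,d)$ can be written as $s(x)=G(x)t(x)$ with $\deg t(x)<\deg h(x)$, since $\mathcal{C}_{p\tau}(g(x),\tau,q,d)=\langle G(x)\rangle$ and $G(x)\mid 1+x^{p\tau}$. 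Then $(1+x^{i})s(x)\equiv 0\pmod{1+x^{p\tau}}$ is equivalent to $G(x)h(x)\mid(1+x^{i})G(x)t(x)$, i.e.\ to $h(x)\mid(1+x^{i})t(x)$; writing $d(x)=\gcd(h(x),1+x^{i})$ this forces $\bigl(h(x)/d(x)\bigr)\mid t(x)$, so a nonzero admissible $t(x)$ (of degree $<\deg h(x)$) exists if and only if $\deg d(x)\geq 1$. Hence $\textsf{GEBR}(p,\tau,k,r,q,g(x))$ is $(n,k)$ recoverable if and only if
\[
\gcd\bigl(h(x),\,1+x^{i}\bigr)=1 \quad\text{for every }i\in\{1,2,\dots,k+r-1\}.
\]

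Next I would expose the characteristic-$2$ structure of $1+x^{\tau}+\cdots+x^{(p-1)\tau}=g(x)h(x)$. Write $\gamma=2^{c}\gamma'$ with $\gamma'$ odd and put $\tau_{o}=\gamma'p^{\nu}$, which is odd. Using $(a+b)^{2}=a^{2}+b^{2}$ one gets $1+x^{\tau}+\cdots+x^{(p-1)\tau}=\psi(x)^{2^{c}}$, where $\psi(x)=1+x^{\tau_{o}}+\cdots+x^{(p-1)\tau_{o}}=\tfrac{x^{p\tau_{o}}+1}{x^{\tau_{o}}+1}$ is squarefree (as $p\tau_{o}$ is odd). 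A root $\mu$ of $\psi(x)$ is a $(p\tau_{o})$-th root of unity that is not a $\tau_{o}$-th root of unity; since $p\tau_{o}=\gamma'p^{\nu+1}$ and $\gcd(\gamma',p)=1$, its multiplicative order equals $p^{\nu+1}f$ with $f\mid\gamma'$, so $p^{\nu+1}\mid\mathrm{ord}(\mu)$. Likewise $\Phi(x):=1+x^{p^{\nu}}+\cdots+x^{(p-1)p^{\nu}}=\tfrac{x^{p^{\nu+1}}+1}{x^{p^{\nu}}+1}$ is squarefree with roots exactly the elements of order $p^{\nu+1}$, so $\Phi(x)\mid\psi(x)$, and $1+x^{p^{\nu+1}}=(1+x^{p^{\nu}})\Phi(x)$. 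Finally, because $g(x)h(x)=\psi(x)^{2^{c}}$, every irreducible factor of $h(x)$ divides $\psi(x)$ and hence has a root whose order is a multiple of $p^{\nu+1}$.

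With these tools the two parts are short. For (i): if $1\leq i\leq k+r-1\leq p^{\nu+1}-1$, then every root $\mu$ of $1+x^{i}$ satisfies $\mathrm{ord}(\mu)\mid i<p^{\nu+1}$, so $p^{\nu+1}\nmid\mathrm{ord}(\mu)$; by the previous paragraph $\mu$ cannot be a root of $h(x)$, hence $\gcd(h(x),1+x^{i})=1$, and the reformulation gives $(n,k)$ recoverability. For (ii): since $k+r>p^{\nu+1}$ we may take $i=p^{\nu+1}\in\{1,\dots,k+r-1\}$. Because $\Phi(x)$ is squarefree and $\Phi(x)\nmid g(x)$, some irreducible factor $\pi(x)$ of $\Phi(x)$ does not divide $g(x)$; since $\pi(x)\mid\Phi(x)\mid\psi(x)$ with $\psi(x)$ squarefree, the exponent of $\pi(x)$ in $g(x)h(x)=\psi(x)^{2^{c}}$ is $2^{c}$ while $\pi(x)\nmid g(x)$, so $\pi(x)\mid h(x)$. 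As $\pi(x)\mid\Phi(x)\mid 1+x^{p^{\nu+1}}$, this yields $\gcd\bigl(h(x),1+x^{p^{\nu+1}}\bigr)\neq1$, so the criterion fails and the code is not $(n,k)$ recoverable.

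The main obstacle is the second step: recognizing that over a field of characteristic $2$ the polynomial $1+x^{\tau}+\cdots+x^{(p-1)\tau}$ is the $2^{c}$-th power of the squarefree polynomial $\psi(x)$, and that the orders of the roots of $\psi(x)$ — equivalently, of the irreducible factors of $h(x)$ — are all divisible by $p^{\nu+1}$; this $p^{\nu+1}$ is exactly what produces the threshold in the statement. The only other place needing some care is the multiplicity bookkeeping for $\pi(x)$ in part (ii) when $\gamma$ is even ($c\geq1$); when $\gamma$ is odd it disappears, since then $h(x)$ itself divides the squarefree polynomial $1+x^{\tau}+\cdots+x^{(p-1)\tau}$.
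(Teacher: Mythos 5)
Your proof is correct, but it takes a genuinely different route from the paper. The paper works directly with the criterion of \cite[Lemma 7]{hou2021generalization}: for part (i) it assumes a nonzero codeword $s(x)$ annihilated by $1+x^{i}$, normalizes $s_{0}=1$, propagates equality of coefficients along the orbit $\{\ell i \bmod p\tau\}$, and derives a contradiction with the local constraint $\sum_{v=0}^{p-1}s_{v\tau+\mu}=0$ of Eq.~\eqref{eq:tau-eqn}; for part (ii) it exhibits the explicit nonzero codeword $s(x)=g(x)(1+x^{p^{\nu}})(1+x^{p^{\nu+1}}+\cdots+x^{(\gamma-1)p^{\nu+1}})$ and checks by hand that it lies in $\mathcal{C}_{p\tau}(g(x),\tau,q,d)$ and is killed by $1+x^{p^{\nu+1}}$. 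You instead reduce the whole question to the single sharp criterion $\gcd(h(x),1+x^{i})=1$ for $1\leq i\leq k+r-1$ (your degree argument for the equivalence is sound, including the multiplicity bookkeeping behind $(h/d)\mid t$), and then settle both parts by observing that every irreducible factor of $h(x)$ divides the squarefree polynomial $\psi(x)$ and so has roots of order divisible by $p^{\nu+1}$, while the roots of $1+x^{i}$ have order dividing the odd part of $i$. What the paper's argument buys is that it is entirely elementary (no factorization into irreducibles, no root orders) and constructive in part (ii); what yours buys is a uniform treatment of both parts and, more importantly, an exact if-and-only-if characterization: your criterion also resolves the case $k+r>p^{\nu+1}$ with $(1+x^{p^{\nu}}+\cdots+x^{(p-1)p^{\nu}})\mid g(x)$ that Theorem~\ref{thm:mds3} leaves open (one only has to compare the multiplicity of each irreducible factor of $\Phi(x)$ in $g(x)$ with $2^{c}$), essentially subsuming Theorem~\ref{theo:g(x)} as well. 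The only cosmetic caveat is that your intermediate sentence forcing $(h/d)\mid t$ deserves the one-line exponent comparison you allude to, since $h/d$ and $(1+x^{i})/d$ need not be coprime in characteristic $2$; the conclusion is nevertheless correct.
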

\begin{proof}
Consider the first claim. In order to show that a
$\textsf{GEBR}(p,\tau,k,r,q,g(x))$ code is $(n,k)$ recoverable,
we have to show that Eq. \eqref{eq:mds-cond} has a unique solution $s(x)=0$ for all
$1\leq i\leq p^{\nu+1}-1$.

Let $i=up^s$, where $1\leq u\leq p^{\nu+1-s}-1$ and
$0\leq s\leq \nu$ such that $\gcd (u,p)=1$. Suppose that we can find a
non-zero polynomial $s(x)\in\mathcal{C}_{p\tau}(g(x),\tau,q,d)$ such that
Eq. \eqref{eq:mds-cond} holds.
%, we can deduce a contradiction as follows.
Without loss of generality, let $s(x)=\sum_{v=0}^{m-1}s_{v}x^{v}$
and $s_{0}=1$.  Then we have
\[
s_{((\ell-1)up^s)\bmod m}+s_{(\ell up^s)\bmod m}=0
\]
for $1\leq \ell\leq p^{\nu+1-s}-1$. By induction, we have $s_{(\ell up^s)\bmod m}=1$
for all $0\leq \ell\leq p^{\nu+1-s}-1$. Since $\gcd (u,p)=1$, the two sets
\begin{align*}
&\{(\ell up^s)\bmod m: 0\leq \ell\leq p^{\nu+1-s}-1\} \text{ and }\\
&\{(\ell p^s)\bmod m: 0\leq \ell\leq p^{\nu+1-s}-1\}
\end{align*}
coincide, and therefore we have $s_{\ell p^s}=1$ for $0\leq \ell \leq p^{\nu+1-s}-1$.
In particular, if $\ell=vp^{\nu-s}$ for $0\leq v\leq p-1$, then
\[
s_{\ell p^s}=s_{vp^{\nu-s} p^s}=s_{v\tau}=1.
\]
Therefore, $\sum_{v=0}^{p-1}s_{v\tau}=1$, contradicting $\sum_{v=0}^{p-1}s_{v\tau}=0$
(since $s(x)\in\mathcal{C}_{p\tau}(g(x),\tau,q,d)$).

Next we prove the second claim by showing that Eq.
\eqref{eq:mds-cond} has a non-zero solution $s(x)\in\mathcal{C}_{p\tau}(g(x),\tau,q,d)$
when $i=p^{\nu+1}$. Let
\begin{eqnarray}
&&s_0(x)=g(x)(1+x^{p^{\nu+1}}+x^{2p^{\nu+1}}+\cdots+x^{(\gamma-1)p^{\nu+1}}),\label{eq:mds-s0}\\
&&s_1(x)=g(x)(x^{p^{\nu}}+x^{p^{\nu}+p^{\nu+1}}+x^{p^{\nu}+2p^{\nu+1}}+\cdots+
x^{p^{\nu}+(\gamma-1)p^{\nu+1}})
\label{eq:mds-s1}
\end{eqnarray}
and $s(x)=s_0(x)+s_1(x)\bmod (1+x^m)$. We first show that the
polynomial $s(x)=s_0(x)+s_1(x)$ is in $\mathcal{C}_{p\tau}(g(x),\tau,q,d)$.

Since $s_1(x)=x^{p^\nu}s_0(x)$ and $1+x^m=1+x^{\gamma
p^{\nu+1}}$, we have $(1+x^{p^{\nu+1}})s_0(x)=g(x)(1+x^{\gamma
p^{\nu+1}})=0\bmod (1+x^m)$.
Since $1+x^\tau=1+x^{\gamma p^\nu}$ divides $1+x^{\gamma p^{\nu+1}}$
and $\gcd(\gamma, p)=1$, by the Euclidean algorithm,
$\gcd(1+x^{p^{\nu+1}}, 1+x^{\gamma p^\nu})=1+x^{p^\nu}$.
Since $(1+x^{p^{\nu+1}})(1+x^{p^{\nu+1}}+x^{2p^{\nu+1}}+\cdots+ x^{(\gamma-1)p^{\nu+1}})
=1+x^{\gamma p^{\nu+1}}=(1+x^{\gamma p^{\nu}})q(x)$ and
$\gcd(1+x^{p^{\nu+1}}, 1+x^{\gamma p^{\nu}})=1+x^{p^{\nu}}$, we have
$(1+x^{\gamma
p^{\nu}})|(1+x^{p^{\nu}})(1+x^{p^{\nu+1}}+x^{2p^{\nu+1}}+\cdots+
x^{(\gamma-1)p^{\nu+1}})$.
Hence,
$s(x)=s_0(x)+s_1(x)=g(x)(1+x^{p^{\nu}})(1+x^{p^{\nu+1}}+x^{2p^{\nu+1}}+\cdots+
x^{(\gamma-1)p^{\nu+1}})\bmod (1+x^m)$ is also divided by
$1+x^\tau=1+x^{\gamma p^{\nu}}$, i.e., $s(x)\in
\mathcal{C}_{p\tau}(g(x),\tau,q,d)$. It is clear that
$s_0(x)+s_1(x)\neq 0$. Hence, $s(x)=0\bmod (1+x^m)$ only when
$(1+x^{p^{\nu+1}})|g(x)(1+x^{p^{\nu}})$, i.e.,
$1+x^{p^{\nu}}+x^{2p^{\nu}}+\cdots+x^{(p-1)p^{\nu}}|g(x)$. However,
this is not true from the assumption in (ii), so it is not true such that $s(x)\neq 0$.

From the definitions of $s_0(x)$ and $s_1(x)$, we have
\begin{align*}
x^{p^{\nu+1}}s_0(x)=&g(x)(x^{p^{\nu+1}}+x^{2p^{\nu+1}}+\cdots+x^{\gamma p^{\nu+1}})\\
=&g(x)(x^{p^{\nu+1}}+x^{2p^{\nu+1}}+x^{3p^{\nu+1}}+\cdots+1)\\
=&s_0(x)\bmod (1+x^m),
\end{align*}
where the second equation above comes from $x^{\gamma
p^{\nu+1}}=x^{m}=1\bmod (1+x^m)$.
Similarly, we can obtain that $x^{p^{\nu+1}}s_1(x)=s_1(x)\mod
(1+x^m)$. Therefore, we have
\[
(1+x^{p^{\nu+1}})s(x)=(1+x^{p^{\nu+1}})(s_0(x)+s_1(x))=0\bmod (1+x^m),
\]
and $s(x)\neq 0$ is a solution to $(1+x^{p^{\nu+1}})s(x)=0\bmod (1+x^m)$,
proving the second claim.
\end{proof}

By Theorem \ref{thm:mds3}, the codes $\textsf{GEBR}(p,\tau,k,r,q,g(x))$ are $(n,k)$ recoverable if
$k+r\leq p^{\nu+1}$ and are not $(n,k)$ recoverable if $k+r>p^{\nu+1}$ and
$(1+x^{p^\nu}+(x^{p^\nu})^2+\cdots+(x^{p^\nu})^{p-1})\nmid g(x)$.
On the other hand, if $k+r>p^{\nu+1}$ and
$(1+x^{p^\nu}+(x^{p^\nu})^2+\cdots+(x^{p^\nu})^{p-1})\mid g(x)$, we
do not know whether the codes $\textsf{GEBR}(p,\tau,k,r,q,g(x))$ are $(n,k)$ recoverable.
Theorem 8 in \cite{hou2021generalization} is a special case of Theorem \ref{thm:mds3}
with $g(x)=1$.

When $g(x)$ is a power of $1+x^{p^\nu}+(x^{p^\nu})^2+\cdots+(x^{p^\nu})^{p-1}$ and $\gamma$
is an even number, the necessary and sufficient $(n,k)$ recoverable condition of the codes
$\textsf{GEBR}(p,\tau,k,r,q,g(x))$ is given in the next theorem.

\begin{theorem}
Let $\tau=\gamma p^{\nu}=2^j\ell p^{\nu}$ (i.e, $\gamma$ is an even number), where $j\geq 1$,
$\nu\geq 0$, $\gcd(\ell,2)\eq 1$ and $\gcd(\ell,p)\eq 1$. Let
%$g(x)=(1+x^{p^\nu}+(x^{p^\nu})^2+\cdots+(x^{p^\nu})^{p-1})^{2^i-1}$,
$g(x)=(1+x^{p^\nu}+(x^{p^\nu})^2+\cdots+(x^{p^\nu})^{p-1})^t$,
where $1\leq t\leq 2^j-1$.
Then,

\begin{description}

\item[(i)] If
$k+r\leq p^{\nu+1}$, then the codes
$\textsf{GEBR}(p,\tau,k,r,q,g(x))$ are $(n,k)$ recoverable.

\item[(ii)] If $k+r>p^{\nu+1}$, %and
%$(1+x^{p^\nu}+(x^{p^\nu})^2+\cdots+(x^{p^\nu})^{p-1})\mid g(x)$,
then the codes $\textsf{GEBR}(p,\tau,k,r,q,g(x))$ are not $(n,k)$ recoverable.

\end{description}

\label{theo:g(x)}
\end{theorem}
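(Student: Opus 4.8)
Part (i) follows verbatim from Theorem~\ref{thm:mds3}(i), since that claim places no hypothesis on $g(x)$: whenever $k+r\leq p^{\nu+1}$ the code is $(n,k)$ recoverable, so nothing new has to be done here. The entire burden of the proof is Part (ii), and the natural strategy is again to exhibit, for some shift $i\in\{1,\dots,k+r-1\}$, a \emph{nonzero} polynomial $s(x)\in\mathcal{C}_{p\tau}(g(x),\tau,q,d)$ with $(1+x^i)s(x)=0\bmod(1+x^m)$, thereby violating the criterion in Eq.~\eqref{eq:mds-cond}. The obvious candidate, reusing the construction of Theorem~\ref{thm:mds3}(ii), is $i=p^{\nu+1}$ (which is $\leq \tau\leq p\tau$ and, since $k+r>p^{\nu+1}$, lies in the allowed range) together with $s(x)=g(x)(1+x^{p^\nu})\big(1+x^{p^{\nu+1}}+x^{2p^{\nu+1}}+\cdots+x^{(\gamma-1)p^{\nu+1}}\big)\bmod(1+x^m)$, exactly as in Eqs.~\eqref{eq:mds-s0}--\eqref{eq:mds-s1}. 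The argument there already shows $(1+x^{p^{\nu+1}})s(x)=0\bmod(1+x^m)$ and that $s(x)$ is divisible by $1+x^\tau$; what remains is to check that this particular $s(x)$ is a nonzero element of $\mathcal{C}_{p\tau}(g(x),\tau,q,d)$, i.e.\ that it is divisible by $(1+x^\tau)g(x)$ and is not itself $\equiv 0$.

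The crux is therefore the following: with $g(x)=\big(1+x^{p^\nu}+\cdots+x^{(p-1)p^\nu}\big)^t$, is the polynomial $s(x)=g(x)(1+x^{p^\nu})\big(\sum_{a=0}^{\gamma-1}x^{ap^{\nu+1}}\big)$ zero modulo $1+x^m=1+x^{\gamma p^{\nu+1}}$, or not? Note $1+x^{p^\nu}+\cdots+x^{(p-1)p^\nu}=(1+x^{p^{\nu+1}})/(1+x^{p^\nu})$, so $g(x)(1+x^{p^\nu})=(1+x^{p^{\nu+1}})\cdot\big((1+x^{p^{\nu+1}})/(1+x^{p^\nu})\big)^{t-1}$, and hence $s(x)=(1+x^{p^{\nu+1}})^t(1+x^{p^\nu})^{1-t}\big(\sum_{a}x^{ap^{\nu+1}}\big)$; writing $y=x^{p^\nu}$ and using $\gcd(\gamma,p)=1$ reduces everything to a divisibility question about powers of $1+y$ in $\mathbb{F}_q[y]/(1+y^{p\gamma})$. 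The plan is to pass to the $2$-adic factorization: since $q$ is a power of $2$ and $\gamma=2^j\ell$ with $\ell$ odd, over $\mathbb{F}_q$ we have $1+x^{\gamma p^{\nu+1}}=(1+x^{p^\nu})^{2^j}\cdot\big(\text{odd-multiplicity factors}\big)$ (more precisely the factor $1+x^{p^\nu}$ and, after extending scalars, each of its conjugates over the $\ell p$ distinct $(\ell p)$-th roots appears with multiplicity exactly $2^j$). Then $s(x)\equiv 0 \bmod (1+x^m)$ would force $(1+x^{p^{\nu+1}})^t\big(\sum_a x^{ap^{\nu+1}}\big)$ to absorb the missing factors, and a valuation count at the root $x=1$ (or more carefully at a primitive $(\ell p\cdot 2^{j})$-th-root-of-unity level) shows the $2$-adic multiplicity of $1+x^{p^\nu}$ dividing $s(x)$ is exactly $t+(\text{mult. of }1+x^{p^\nu}\text{ in }1+x^{p^{\nu+1}})\cdot(t) - (t-1) = t+1 \leq 2^j \leq 2^j$, which is \emph{strictly less than} $2^j$ precisely because $t\leq 2^j-1$; hence $(1+x^m)\nmid s(x)$ and $s(x)\neq 0$.

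I expect the main obstacle to be making that multiplicity bookkeeping rigorous: one must separate the ``$p$-part'' $x^{p^\nu}\mapsto$ roots that are $p$-th roots of unity (these live in the $1+x^\tau+\cdots+x^{(p-1)\tau}=g(x)h(x)$ factor and their multiplicities interact with $t$) from the ``$\ell$-part'' and the ``$2$-part'' $2^j$, and argue that $s(x)$ is divisible by $(1+x^\tau)g(x)$ — which, by \cite[Lemma 3]{hou2021generalization} and Eq.~\eqref{eq:tau-eqn}, is equivalent to $\sum_{v=0}^{p-1}s_{v\tau}=0$ \emph{and} $h(x)s(x)=0$ — but \emph{not} divisible by $(1+x^\tau)g(x)(1+x^{p^{\nu+1}})/(1+x^\tau)$ or whatever extra power would be needed to kill it mod $1+x^m$. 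Concretely: verifying $s(x)\in\mathcal{C}_{p\tau}(g(x),\tau,q,d)$ is already done in the proof of Theorem~\ref{thm:mds3}(ii) (that argument never used $g(x)=1$ or any special form of $g(x)$ beyond $g(x)\mid 1+x^\tau+\cdots+x^{(p-1)\tau}$), so the genuinely new content is only the nonvanishing ``$(1+x^m)\nmid g(x)(1+x^{p^\nu})$'' step, where in Theorem~\ref{thm:mds3} it was handed to us by hypothesis and here it must be \emph{derived} from $t\leq 2^j-1$ and $\gamma=2^j\ell$ even. The clean way to finish: over the splitting field, $1+x^m$ has $1+x^{p^\nu}$-type factors of multiplicity exactly $2^j$, while $g(x)(1+x^{p^\nu})$ contributes multiplicity $t\cdot 1 + 1 = t+1 \le 2^j$ at those factors \emph{with equality iff $t=2^j-1$}, and the extra cofactor $\sum_a x^{ap^{\nu+1}}$ (whose roots are the nontrivial $\gamma$-th roots of unity, disjoint from the relevant factors since $\gcd(\gamma,p)=1$ handles the $p$-part and the $2$-part is exactly where we are counting) contributes nothing there; so $s(x)$ misses at least one needed copy of that factor, hence $s(x)\neq 0\bmod(1+x^m)$, completing the proof of (ii).
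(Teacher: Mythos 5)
There is a genuine gap in Part (ii): the witness polynomial you propose is identically zero in the ring, so it cannot certify failure of the $(n,k)$ recoverable property. With $g(x)=\bigl(1+x^{p^\nu}+\cdots+x^{(p-1)p^\nu}\bigr)^t=\bigl(\tfrac{1+x^{p^{\nu+1}}}{1+x^{p^\nu}}\bigr)^t$ and $t\geq 1$, you have
\begin{equation*}
g(x)\,(1+x^{p^\nu})=(1+x^{p^{\nu+1}})\Bigl(1+x^{p^\nu}+\cdots+x^{(p-1)p^\nu}\Bigr)^{t-1},
\end{equation*}
which is a polynomial multiple of $1+x^{p^{\nu+1}}$; combined with $(1+x^{p^{\nu+1}})\sum_{a=0}^{\gamma-1}x^{ap^{\nu+1}}=1+x^{m}$, your candidate satisfies
\begin{equation*}
s(x)=g(x)(1+x^{p^\nu})\sum_{a=0}^{\gamma-1}x^{ap^{\nu+1}}=(1+x^{m})\Bigl(1+x^{p^\nu}+\cdots+x^{(p-1)p^\nu}\Bigr)^{t-1}\equiv 0 \bmod (1+x^{m}).
\end{equation*}
This is exactly the degenerate case that the hypothesis $(1+x^{p^\nu}+\cdots+(x^{p^\nu})^{p-1})\nmid g(x)$ of Theorem~\ref{thm:mds3}(ii) was designed to exclude, and Theorem~\ref{theo:g(x)} lives entirely inside that excluded case; reusing the construction of Eqs.~\eqref{eq:mds-s0}--\eqref{eq:mds-s1} verbatim therefore cannot work. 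Your multiplicity bookkeeping is also internally inconsistent: you conclude the relevant multiplicity is $t+1\leq 2^j$ and then assert this is \emph{strictly} less than $2^j$ because $t\leq 2^j-1$, but equality holds at the admissible value $t=2^j-1$; moreover $g(x)$ contributes multiplicity $0$, not $t$, at the roots of $1+x^{p^\nu}$, since $g(\omega)=p^t=1\neq 0$ in characteristic $2$ whenever $\omega^{p^\nu}=1$.

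The paper's proof repairs precisely this point by choosing a different witness, namely your candidate with the factor $g(x)$ deleted:
\begin{equation*}
s(x)=\left(\frac{1+x^{2^j\ell p^{\nu+1}}}{1+x^{p^{\nu+1}}}\right)(1+x^{p^{\nu}})
=\Bigl(1+x^{p^{\nu+1}}+\cdots+x^{(2^j\ell-1)p^{\nu+1}}\Bigr)(1+x^{p^{\nu}}).
\end{equation*}
Its nonvanishing is then immediate from a degree count ($\deg s=(2^j\ell-1)p^{\nu+1}+p^{\nu}<m$), so no valuation argument is needed there; the genuinely new work, which your proposal dismisses as ``already done,'' is to show that this smaller polynomial is \emph{still} divisible by $g(x)(1+x^{\tau})$. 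The paper does this (for the extremal case $t=2^j-1$) by rewriting $s(x)/\bigl(g(x)(1+x^{\tau})\bigr)$ as a $2^j$-th power of a ratio and reducing to the fact that $\tfrac{(1+x^{p^{\nu+1}})(1+x^{\ell p^{\nu}})}{1+x^{p^{\nu}}}$ divides $1+x^{\ell p^{\nu+1}}$, via $\gcd(1+x^{p^{\nu+1}},1+x^{\ell p^{\nu}})=1+x^{p^{\nu}}$. That divisibility step is where the hypotheses $\gamma=2^j\ell$ even and $1\leq t\leq 2^j-1$ actually enter, and it is absent from your argument.
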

\begin{proof} It is enough to prove (ii), since (i) is a special case
of Theorem~\ref{thm:mds3}.
When $\tau=2^j\ell p^{\nu}$, we have $m=p\tau=2^j\ell p^{\nu+1}$ and
\begin{align*}
1+x^{\tau}+x^{2\tau}+\cdots+x^{(p-1)\tau}&=
1+x^{2^j\ell p^{\nu}}+x^{2^j\ell \cdot 2\cdot p^{\nu}}+\cdots+x^{2^j\ell (p-1)p^{\nu}}\\
&=(1+x^{\ell p^{\nu}}+x^{2\ell p^{\nu}}+\cdots+x^{(p-1)\ell p^{\nu}})^{2^j}.
\end{align*}
We claim that $(1+x^{p^{\nu}}+x^{2 p^{\nu}}+\cdots+x^{(p-1) p^{\nu}})|
(1+x^{\ell p^{\nu}}+x^{2\ell p^{\nu}}+\cdots+x^{(p-1)\ell p^{\nu}})$ as follows.
Recall that
\begin{align*}
1+x^{2^j\ell p^{\nu+1}}=&(1+x^{p^{\nu+1}})(1+x^{p^{\nu+1}}+x^{2p^{\nu+1}}+\cdots+x^{(2^j\ell-1)p^{\nu+1}})\\
=&(1+x^{2^j\ell p^{\nu}})(1+x^{2^j\ell p^{\nu}}+x^{2\cdot 2^j\ell p^{\nu}}+\cdots+x^{(p-1)2^j\ell p^{\nu}}).
\end{align*}
Since $\gcd(2^j\ell,p)=1$, we have $\gcd(1+x^{p^{\nu+1}},1+x^{2^j\ell p^{\nu}})=1+x^{p^{\nu}}$
and therefore
\[
1+x^{p^{\nu+1}}=(1+x^{p^{\nu}})(1+x^{p^{\nu}}+x^{2p^{\nu}}+\cdots+x^{(p-1)p^{\nu}})
\]
is a factor of $(1+x^{p^{\nu}})(1+x^{2^j\ell p^{\nu}}+x^{2\cdot 2^j\ell p^{\nu}}+\cdots+x^{(p-1)2^j\ell p^{\nu}})$, we can further obtain that $(1+x^{p^{\nu}}+x^{2 p^{\nu}}+\cdots+x^{(p-1) p^{\nu}})|
(1+x^{\ell p^{\nu}}+x^{2\ell p^{\nu}}+\cdots+x^{(p-1)\ell p^{\nu}})$.
Therefore, $g(x)\eq
(1+x^{p^\nu}+(x^{p^\nu})^2+\cdots+(x^{p^\nu})^{p-1})^{t}$ is a factor of
$1+x^{\tau}+x^{2\tau}+\cdots+x^{(p-1)\tau}$, where $1\leq t\leq 2^j-1$.

Let
\begin{eqnarray*}
s(x)&=&\left(\frac{1+x^{2^j\ell p^{\nu+1}}}{1+x^{p^{\nu+1}}}\right)\left(1+x^{p^{\nu}}\right),
\label{eqhs}
\end{eqnarray*}
which is in $\mathcal{R}_{p\tau}(q)$, since
\[
1+x^{2^j\ell p^{\nu+1}}=(1+x^{p^{\nu+1}})(1+x^{p^{\nu+1}}+x^{2\cdot p^{\nu+1}}+\cdots+x^{(2^j\ell-1)p^{\nu+1}}),
\]
is a multiple of $1+x^{p^{\nu+1}}$.
The degree of
\[
s(x)=\left(1+x^{p^{\nu+1}}+x^{2\cdot p^{\nu+1}}+\cdots+x^{(2^j\ell-1)p^{\nu+1}}\right)\left(1+x^{p^{\nu}}\right),
\]
is $(2^j\ell-1)p^{\nu+1}+p^{\nu}$ which is strictly less than $m=2^j\ell p^{\nu+1}$.
Therefore, $s(x)\bmod (1+x^{m})\neq 0$.

Since $(1+x^{p^{\nu+1}})s(x)\eq 0 \bmod (1+x^m)$, it remains to be proved
that $s(x)\in \mathcal{C}_{p\tau}(g(x),\tau,q,d)$, i.e., we have to
prove that $g(x)(1+x^{\tau})$ divides $s(x)$. Without loss of
generality, suppose that $t\eq 2^j-1$ and notice that

\begin{eqnarray*}
\frac{s(x)}{g(x)(1+x^{\tau})}&=&
\frac{\left(\frac{1+x^{2^j\ell
p^{\nu+1}}}{1+x^{p^{\nu+1}}}\right)\left(1+x^{p^{\nu}}\right)}
{\left( 1+x^{p^\nu}+(x^{p^\nu})^2+\cdots+(x^{p^\nu})^{p-1}\right)^{2^j-1}(1+x^{2^j\ell
p^{\nu}})}\\
&=&\frac{\left(\frac{\left(1+x^{\ell p^{\nu+1}}\right)^{2^j}}
{1+x^{p^{\nu+1}}}\right)\left(1+x^{p^{\nu}}\right)}
{\left(\frac{1+x^{p^{\nu +1}}}{1+x^{p^{\nu}}}\right)^{2^j-1}
\left(1+x^{\ell p^{\nu}}\right)^{2^j}}\\
%&=&\left(\frac{\left(1+x^{\ell
%p^{\nu+1}}\right)\left(1+x^{p^{\nu}}\right)}{\left(1+x^{p^{\nu+1}}\right)\left(1+x^{\ell
%p^{\nu}}\right)}\right)^{2^j}\\
%
%&=&\left(\frac{1+x^{\ell p^{\nu+1}}}
%{\frac{\left(1+x^{p^{\nu+1}}\right)\left(1+x^{\ell p^{\nu}}\right)}
%{1+x^{p^{\nu}}}\right)^{2^j},
&=&\left(\frac{1+x^{\ell p^{\nu+1}}}
{\frac{\left(1+x^{p^{\nu+1}}\right)\left(1+x^{\ell p^{\nu}}\right)}{1+x^{p^{\nu}}}}
\right)^{2^j},
\end{eqnarray*}
so, $g(x)(1+x^{\tau})$ divides $s(x)$ if and only if
%$\left(1+x^{p^{\nu+1}}\right)\left(\frac{1+x^{\ell p^{\nu}}}{1+x^{p^{\nu}}}\right)$
$\frac{\left(1+x^{p^{\nu+1}}\right)\left(1+x^{\ell p^{\nu}}\right)}{1+x^{p^{\nu}}}$
divides $1+x^{\ell p^{\nu+1}}$. In effect, notice that
\begin{eqnarray*}
\gcd(1+x^{p^{\nu+1}},1+x^{\ell p^{\nu}})\eq 1+x^{\gcd(p^{\nu+1}\,,\,\ell p^{\nu})}\eq 1+x^{p^{\nu}},
\end{eqnarray*}
so, since both
$1+x^{p^{\nu+1}}$ and $1+x^{\ell p^{\nu}}$ divide $1+x^{\ell p^{\nu+1}}$, then their
product divided by their largest common divisor also divides $1+x^{\ell p^{\nu+1}}$,
completing the proof.
\end{proof}

\begin{example}\label{exm:g(x)}
Consider the example of $p=5$, $j=1$, $\ell=1$, $\nu=1$ and $t=1$, we have $\tau=2^j \ell p^{\nu}=10$,
$m=2^j \ell p^{\nu+1}=50$ and $g(x)=1+x^5+x^{10}+x^{15}+x^{20}$.
By Theorem \ref{theo:g(x)}, the code $\textsf{GEBR}(5,10,k,r,q,1+x^5+x^{10}+x^{15}+x^{20})$
is $(n,k)$ recoverable if $k+r\leq 25$ and not $(n,k)$ recoverable if $k+r> 25$.

Assume that $k+r\leq 25$, the code is $(n,k)$ recoverable if and only if
$(1+x^i)s(x)=0\bmod (1+x^{50})$ has a unique solution $s(x)=0\in\mathcal{C}_{50}(1+x^5+x^{10}+x^{15}+x^{20},10,q,d)$ for
$i\in \{1,2,\ldots,24\}$. For example, take $i=20$, and suppose that
$s(x)=\sum_{\ell=0}^{49}s_{\ell}x^{\ell}\neq 0$. Without loss of generality, let
$s_0=1$, then we have $s_0=s_{20}=s_{40}=s_{10}=s_{30}=1$ and hence
$s_0+s_{20}+s_{40}+s_{10}+s_{30}=1$. This contradicts Eq. \eqref{eq:tau-eqn}.
Therefore, $(1+x^i)s(x)=0\bmod (1+x^{50})$ has a unique solution $s(x)=0$
in $\mathcal{C}_{50}(1+x^5+x^{10}+x^{15}+x^{20},10,q,d)$.

Suppose that $k+r> 25$, the code is not $(n,k)$ recoverable by Theorem \ref{theo:g(x)},
since we can find a non-zero solution $s(x)\in \mathcal{C}_{50}(1+x^5+x^{10}+x^{15}+x^{20},10,q,d)$
for $(1+x^{25})s(x)=0\bmod (1+x^{50})$.
Following the proof of Theorem \ref{theo:g(x)}, we can find that
$s(x)=(1+x^{25})(1+x^{5})\neq 0\bmod(1+x^{50})$.
It is easy to check that $(1+x^{25})s(x)=(1+x^{25})(1+x^{25})(1+x^{15})=0\bmod (1+x^{50})$
and $s(x)=(1+x^{25})(1+x^{15})\in \mathcal{C}_{50}(1+x^5+x^{10}+x^{15}+x^{20},10,q,d)$.
Therefore, the code is not $(n,k)$ recoverable when $k+r> 25$.
\end{example}

When 2 is a primitive element in $\mathbb{Z}_p$ and $q=2$, the polynomial
$1+x^{p^\nu}+(x^{p^\nu})^2+\cdots+(x^{p^\nu})^{p-1}$ is an irreducible polynomial
in $\mathbb{F}_2[x]$ \cite{Itoh1991Characterization}. Then $g(x)$ is a factor of
$(1+x^{p^\nu}+(x^{p^\nu})^2+\cdots+(x^{p^\nu})^{p-1})^{2^j}$ is reduced to that
$g(x)$ is a power of $1+x^{p^\nu}+(x^{p^\nu})^2+\cdots+(x^{p^\nu})^{p-1}$.
We can directly obtain the following results.

\begin{corollary}
Let $\tau=2^j\ell p^{\nu}$, where $j\geq 1$,
$\nu\geq 0$, $\gcd(\ell,2)\eq 1$ and $\gcd(\ell,p)\eq 1$.
Let $(1+x^{p^\nu}+(x^{p^\nu})^2+\cdots+(x^{p^\nu})^{p-1})|g(x)$.
Let $p$ be a prime number such that 2 is a primitive element in $\mathbb{Z}_p$ and $q=2$.
Then,

\begin{description}

\item[(i)] If
$k+r\leq p^{\nu+1}$, then the codes
$\textsf{GEBR}(p,\tau,k,r,q,g(x))$ are $(n,k)$ recoverable.

\item[(ii)] If $k+r>p^{\nu+1}$, %and
%$(1+x^{p^\nu}+(x^{p^\nu})^2+\cdots+(x^{p^\nu})^{p-1})\mid g(x)$,
then the codes $\textsf{GEBR}(p,\tau,k,r,q,g(x))$ are not $(n,k)$ recoverable.

\end{description}

\label{coro:g(x)}
\end{corollary}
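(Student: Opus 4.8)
The plan is to obtain both statements as specializations of Theorem~\ref{thm:mds3} and Theorem~\ref{theo:g(x)}, the only new ingredient being the irreducibility of $1+x^{p^\nu}+(x^{p^\nu})^2+\cdots+(x^{p^\nu})^{p-1}$. Part~(i) needs no hypothesis on $g(x)$ beyond what defines a GEBR code: since $\tau=2^j\ell p^\nu$ with $\gcd(2^j\ell,p)=1$ and $2^j\ell>0$, it is literally the content of Theorem~\ref{thm:mds3}(i). So the entire argument concerns part~(ii).

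For part~(ii), put $I(x):=1+x^{p^\nu}+(x^{p^\nu})^2+\cdots+(x^{p^\nu})^{p-1}$. Because $2$ is a primitive element of $\mathbb{Z}_p$ and $q=2$, $I(x)$ is irreducible over $\mathbb{F}_2$ by \cite{Itoh1991Characterization}. By the definition of a GEBR code, $g(x)$ divides $1+x^{\tau}+x^{2\tau}+\cdots+x^{(p-1)\tau}$, and, as already computed in the proof of Theorem~\ref{theo:g(x)}, one has $1+x^{\tau}+\cdots+x^{(p-1)\tau}=\bigl(1+x^{\ell p^\nu}+\cdots+x^{(p-1)\ell p^\nu}\bigr)^{2^j}$ together with $I(x)\mid 1+x^{\ell p^\nu}+\cdots+x^{(p-1)\ell p^\nu}$. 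Since $I(x)$ is irreducible, divides $g(x)$, and $g(x)$ in turn divides a $2^j$-th power in which $I(x)$ occurs, unique factorization in $\mathbb{F}_2[x]$ forces $g(x)=I(x)^{t}$ for an integer $t$ with $1\le t\le 2^j-1$. For such $g(x)$ the hypotheses of Theorem~\ref{theo:g(x)} hold, and conclusion~(ii) of that theorem gives that $\textsf{GEBR}(p,\tau,k,r,q,g(x))$ is not $(n,k)$ recoverable whenever $k+r>p^{\nu+1}$, which is exactly what is claimed.

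The only genuinely non-routine step — and the one I expect to be the main obstacle — is this reduction of ``$I(x)\mid g(x)$'' to ``$g(x)=I(x)^{t}$ with $t$ in the admissible range of Theorem~\ref{theo:g(x)}''. It is where all three standing assumptions ($2$ primitive in $\mathbb{Z}_p$, $q=2$, and $g(x)\mid 1+x^{\tau}+\cdots+x^{(p-1)\tau}$) are actually used: it amounts to pinning down the multiplicity of the irreducible factor $I(x)$ inside $1+x^{\tau}+\cdots+x^{(p-1)\tau}$ and ruling out any other irreducible factor occurring in $g(x)$. I would carry this out with the same Euclidean-algorithm bookkeeping used at the start of the proof of Theorem~\ref{theo:g(x)} — via $1+x^{p^{\nu+1}}=I(x)\,(1+x^{p^\nu})$ and $\gcd\bigl(1+x^{p^{\nu+1}},\,1+x^{\ell p^\nu}\bigr)=1+x^{p^\nu}$ — together with the fact that $1+x^{\ell p^{\nu+1}}$ is separable over $\mathbb{F}_2$ because $\ell p^{\nu+1}$ is odd. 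Once that multiplicity count is in place there is nothing left to compute: part~(ii) falls out of Theorem~\ref{theo:g(x)}.
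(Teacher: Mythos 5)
Your overall route is the one the paper takes: part (i) is Theorem~\ref{thm:mds3}(i) applied verbatim (with $\gamma=2^j\ell$), and part (ii) is meant to follow from Theorem~\ref{theo:g(x)}(ii) once the irreducibility of $I(x)=1+x^{p^\nu}+(x^{p^\nu})^2+\cdots+(x^{p^\nu})^{p-1}$ over $\mathbb{F}_2$ (from \cite{Itoh1991Characterization}) pins down $g(x)$. The paper itself justifies the corollary only by the one sentence preceding it, so on strategy you are aligned with it.

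The difficulty is the reduction itself, which you correctly single out as the crux but do not close, and which as you state it is false. From the hypotheses you know only $I(x)\mid g(x)$ and $g(x)\mid 1+x^{\tau}+\cdots+x^{(p-1)\tau}=\bigl(1+x^{\ell p^{\nu}}+\cdots+x^{(p-1)\ell p^{\nu}}\bigr)^{2^j}$. Unique factorization does \emph{not} force $g(x)=I(x)^t$ with $1\le t\le 2^j-1$, for two reasons. First, when $\ell>1$ the polynomial $1+x^{\ell p^{\nu}}+\cdots+x^{(p-1)\ell p^{\nu}}$ has degree $(p-1)\ell p^{\nu}>\deg I(x)$, hence irreducible factors other than $I(x)$, any of which may occur in $g(x)$; irreducibility of $I(x)$ excludes nothing here. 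Second, since $\ell p^{\nu+1}$ is odd, $1+x^{\ell p^{\nu+1}}$ is separable, so $I(x)$ occurs with multiplicity exactly one in $1+x^{\ell p^{\nu}}+\cdots+x^{(p-1)\ell p^{\nu}}$ and exactly $2^j$ in $1+x^{\tau}+\cdots+x^{(p-1)\tau}$; your multiplicity count therefore yields $1\le t\le 2^j$, and Theorem~\ref{theo:g(x)} does not cover $t=2^j$. So the ``Euclidean-algorithm bookkeeping'' you defer to cannot deliver the claimed conclusion. To prove (ii) under the hypotheses as stated one must instead re-enter the proof of Theorem~\ref{theo:g(x)}: its witness $s(x)=\bigl(\tfrac{1+x^{2^j\ell p^{\nu+1}}}{1+x^{p^{\nu+1}}}\bigr)(1+x^{p^{\nu}})$ satisfies $\tfrac{s(x)}{I(x)^{2^j-1}(1+x^{\tau})}=\bigl(\tfrac{1+x^{\ell p^{\nu}}+\cdots+x^{(p-1)\ell p^{\nu}}}{I(x)}\bigr)^{2^j}$, which contains every irreducible factor of $1+x^{\tau}+\cdots+x^{(p-1)\tau}$ other than $I(x)$ to full multiplicity $2^j$, so $g(x)(1+x^{\tau})\mid s(x)$ for every admissible $g(x)$ in which $I(x)$ occurs with multiplicity at most $2^j-1$; the remaining case $I(x)^{2^j}\mid g(x)$ needs a separate argument or an added hypothesis. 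This is exactly the point that the paper's own one-line justification also glosses over, but your write-up asserts the false intermediate claim explicitly, so it must be repaired rather than merely elaborated.
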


Recall that codes $\textsf{GEBR}(p,\tau,k,r,q,g(x))$ and $\textsf{GBR}(p,\tau,k,r,q,g(x))$ share
the same parity-check matrix, the difference being that
$\textsf{GEBR}(p,\tau,k,r,q,g(x))$ is a code over
$\mathcal{C}_{p\tau}(g(x),\tau,q,d)$ and
$\textsf{GBR}(p,\tau,k,r,q,g(x))$ is a code over
$\mathbb{F}_{q}[x]/(h(x))$. Since the ring $\mathcal{C}_{p\tau}(g(x),\tau,q,d)$
is isomorphic to $\mathbb{F}_{q}[x]/(h(x))$ when
$\gcd(g(x),h(x))=\gcd(1+x^{\tau},h(x))=1$ \cite[Lemma 2]{hou2021generalization}, %the two codes
$\textsf{GEBR}(p,\tau,k,r,q,g(x))$ and $\textsf{GBR}(p,\tau,k,r,q,g(x))$ share
the same $(n,k)$ recoverable condition if $\gcd(g(x),h(x))=\gcd(1+x^{\tau},h(x))=1$. Therefore, the MDS condition of codes
$\textsf{GBR}(p,\tau,k,r,q,g(x))$  can be directly obtained from
Theorem~\ref{thm:mds3} as follows:
\begin{theorem}
Let $\tau=\gamma p^{\nu}$,
where $\nu\geq 0$, $0< \gamma$ and $\gcd (\gamma,p)=1$. If
$\gcd(g(x),h(x))=\gcd(1+x^{\tau},h(x))=1$, then,

\begin{description}

\item[(i)] If
$k+r\leq p^{\nu+1}$, then the codes
$\textsf{GBR}(p,\tau,k,r,q,g(x))$ are MDS.

\item[(ii)] If
$k+r>p^{\nu+1}$ and
$(1+x^{p^\nu}+(x^{p^\nu})^2+\cdots+(x^{p^\nu})^{p-1})\nmid g(x)$,
then the codes $\textsf{GBR}(p,\tau,k,r,q,g(x))$ are not MDS.

\end{description}
\label{thm:mds4}
\end{theorem}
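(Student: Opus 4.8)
The plan is to deduce Theorem~\ref{thm:mds4} from Theorem~\ref{thm:mds3} by transporting the $(n,k)$ recoverable condition across the ring isomorphism $\mathcal{C}_{p\tau}(g(x),\tau,q,d)\cong\mathbb{F}_q[x]/(h(x))$; once that transport is justified, the statement is a reduction carrying no new combinatorial content, and the only work is setting up the dictionary carefully.

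First I would record the algebraic form of the MDS condition for $\textsf{GBR}(p,\tau,k,r,q,g(x))$. Since a GBR code stores $\alpha k$ information symbols in $k$ columns of $\alpha$ symbols each, for GBR codes being MDS is the same as being $(n,k)$ recoverable. Re-running the erasure-decoding argument behind \cite[Lemma 7]{hou2021generalization} over $\mathbb{F}_q[x]/(h(x))$ in place of $\mathcal{C}_{p\tau}(g(x),\tau,q,d)$ --- the only structural inputs being that the parity-check matrix \eqref{eq:matrixH2} is Vandermonde in $1,x,\dots,x^{k+r-1}$ and that $x$ is a unit in the coefficient ring --- shows that $\textsf{GBR}(p,\tau,k,r,q,g(x))$ is MDS if and only if, for every $i\in\{1,2,\dots,k+r-1\}$, the relation $(1+x^i)s(x)=0$ in $\mathbb{F}_q[x]/(h(x))$ forces $s(x)=0$. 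The unit hypothesis on $x$ is immediate: $h(x)$ divides $1+x^{\tau}+\cdots+x^{(p-1)\tau}$, which has constant term $1$, so $h(0)\neq 0$ and $\gcd(x,h(x))=1$.

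Next I would invoke the standing hypothesis $\gcd(g(x),h(x))=\gcd(1+x^{\tau},h(x))=1$, under which \cite[Lemma 2]{hou2021generalization} supplies a ring isomorphism $\phi\colon\mathcal{C}_{p\tau}(g(x),\tau,q,d)\to\mathbb{F}_q[x]/(h(x))$; concretely $\phi$ is obtained by restricting the Chinese Remainder isomorphism $\mathcal{R}_{p\tau}(q)\cong\mathbb{F}_q[x]/(g(x)(1+x^{\tau}))\times\mathbb{F}_q[x]/(h(x))$ of \cite[Theorem 1]{hou2021generalization} to $\mathcal{C}_{p\tau}(g(x),\tau,q,d)$ and projecting onto the second factor, so it is $\mathbb{F}_q[x]$-linear and therefore commutes with multiplication by $x$, hence with multiplication by $1+x^i$. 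Since $\phi$ is a bijection, $(1+x^i)s(x)=0$ has only the solution $s(x)=0$ in $\mathcal{C}_{p\tau}(g(x),\tau,q,d)$ if and only if the same holds in $\mathbb{F}_q[x]/(h(x))$. Combining this with \cite[Lemma 7]{hou2021generalization} and the characterization of the previous paragraph, $\textsf{GBR}(p,\tau,k,r,q,g(x))$ is MDS exactly when $\textsf{GEBR}(p,\tau,k,r,q,g(x))$ is $(n,k)$ recoverable; part~(i) then follows from Theorem~\ref{thm:mds3}(i), and part~(ii) from Theorem~\ref{thm:mds3}(ii), the hypothesis $k+r>p^{\nu+1}$ ensuring that the offending exponent $i=p^{\nu+1}$ lies in $\{1,\dots,k+r-1\}$.

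The step I expect to require the most care is verifying that $\phi$ genuinely intertwines multiplication by $x$, since $x$ lives in the ambient ring $\mathcal{R}_{p\tau}(q)$ rather than in the code $\mathcal{C}_{p\tau}(g(x),\tau,q,d)$ itself; without this the two ``$(1+x^i)s(x)=0$'' conditions need not coincide and the reduction collapses. As a self-contained alternative that avoids quoting the GBR analogue of \cite[Lemma 7]{hou2021generalization}, one may argue directly over $\mathbb{F}_q[x]/(h(x))$: the $r\times r$ erasure submatrix of \eqref{eq:matrixH2} on erased columns $j_1<\cdots<j_r$ has Vandermonde determinant $\prod_{1\le a<b\le r}(x^{j_b}-x^{j_a})$, which in characteristic $2$ equals $x^{N}\prod_{1\le a<b\le r}(1+x^{j_b-j_a})$ for some integer $N\geq 0$; since $x$ is a unit modulo $h(x)$, this submatrix is invertible over $\mathbb{F}_q[x]/(h(x))$ precisely when each $1+x^{j_b-j_a}$ is a non-zero-divisor there, equivalently --- as $\mathbb{F}_q[x]/(h(x))$ is a finite commutative ring --- a unit, which is exactly the condition that $(1+x^i)s(x)=0$ forces $s(x)=0$ for all $i\in\{1,\dots,k+r-1\}$.
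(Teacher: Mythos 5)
Your proposal is correct and follows essentially the same route as the paper, which likewise obtains Theorem~\ref{thm:mds4} by transporting the $(n,k)$ recoverable condition of Theorem~\ref{thm:mds3} across the isomorphism $\mathcal{C}_{p\tau}(g(x),\tau,q,d)\cong\mathbb{F}_q[x]/(h(x))$ guaranteed by $\gcd(g(x),h(x))=\gcd(1+x^{\tau},h(x))=1$. You simply make explicit the points the paper leaves implicit (that the isomorphism intertwines multiplication by $x$, and the Vandermonde/non-zero-divisor characterization of MDS over $\mathbb{F}_q[x]/(h(x))$), which is a sound elaboration rather than a different argument.
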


When $g(x)=1$ and $\tau$ is a power of $p$, we have $h(x)=1+x^{\tau}+\cdots+x^{(p-1)\tau}$,
$\gcd(g(x),h(x))=1$ and $\gcd(1+x^{\tau},h(x))=1$ \cite[Theorem 10]{hou2021generalization}.
We can obtain that the necessary and sufficient MDS condition of codes
$\textsf{GBR}(p,\tau,k,r,q,g(x))$ with $k+r=p\tau$ is $\tau=p^{\nu}$, where $\nu\geq 0$.
When $\gcd(g(x),h(x))\neq 1$ or $\gcd(1+x^{\tau},h(x))\neq 1$, then
$\mathcal{C}_{p\tau}(g(x),\tau,q,d)$ is not isomorphic to $\mathbb{F}_{q}[x]/(h(x))$ and
we cannot obtain the MDS condition as in Theorem \ref{thm:mds4}.
The MDS condition of GBR codes with $\gcd(g(x),h(x))\neq 1$ or $\gcd(1+x^{\tau},h(x))\neq 1$
is a subject of future work.

\section{Recovery of Erased Lines of Slope $i$ in $\textsf{GEBR}(p,\tau,k,r,q,g(x))$ Codes}
\label{sec:lines}
In this section, we assume that $\textsf{GEBR}(p,\tau,k,r,q,g(x))$ codes
are $(n,k)$ recoverable. We will
show that $\textsf{GEBR}(p,\tau,k,r,q,1)$ codes can recover some erased
lines of slope $i$ under some constraint, where $i\in\{0,1,\ldots,p\tau-1\}$.
Similarly, we also present a sufficient condition of recovering some erased lines
of a slope for $\textsf{GBR}(p,\tau,k,r,q,g(x))$.
We first discuss some properties of the linear system over $\mathcal{R}_{p\tau}$
and then show the condition for recovering the erased lines.

\subsection{Linear System over $\mathcal{R}_{p\tau}$}
Let $\mathbf{V}_{t\times t}=[a_{i,j}(x)]$ be a $t\times t$ matrix with the entry
in row $i$ and column $j$ being $a_{i,j}(x)\in \mathcal{R}_{p\tau}$, where
$0\leq i,j\leq t-1$ and $t\geq 2$.
Consider the linear system
\begin{equation}
\mathbf{u}\mathbf{V}_{t\times t}=\mathbf{v}\bmod (1+x^{p\tau}),
\label{eq:linear-sys}
\end{equation}
where $\mathbf{u}=[u_0(x),u_1(x),\ldots,u_{t-1}(x)]\in\mathcal{R}_{p\tau}^t$ and
$\mathbf{v}=[v_0(x),v_1(x),\ldots,v_{t-1}(x)]\in\mathcal{R}_{p\tau}^t$.
If $a_{i,j}(x)$ is a power of $x$ and Eq. \eqref{eq:linear-sys} holds, one necessary requirement is that
\[
v_0(1)=v_1(1)=\cdots=v_{t-1}(1),
\]
as each polynomial $v_j(x)$ is computed by adding a cyclically shifted version
of $u_i(x)$.

\begin{theorem}
Assume that the determinant of $\mathbf{V}_{t\times t}$ and
$1+x^{\tau}+\cdots+x^{(p-1)\tau}$ are relatively prime.
Then, all the vectors $\mathbf{u}$ satisfying Eq. \eqref{eq:linear-sys} are
congruent to each other modulo $1+x^{\tau}+\cdots+x^{(p-1)\tau}$.
\label{thm:linear-sys-cong}
\end{theorem}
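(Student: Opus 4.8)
The plan is to work over the quotient ring $\mathcal{R}_{p\tau}(q)/(1+x^{\tau}+\cdots+x^{(p-1)\tau})$ and show that $\mathbf{V}_{t\times t}$ becomes invertible there, which forces uniqueness of $\mathbf{u}$ modulo $1+x^{\tau}+\cdots+x^{(p-1)\tau}$. First I would reduce the whole system modulo $1+x^{\tau}+\cdots+x^{(p-1)\tau}$: since $1+x^{p\tau}=(1+x^{\tau})(1+x^{\tau}+\cdots+x^{(p-1)\tau})$, the map $\mathcal{R}_{p\tau}(q)\to \mathbb{F}_q[x]/(1+x^{\tau}+\cdots+x^{(p-1)\tau})$ is a well-defined ring homomorphism, and applying it entrywise to Eq.~\eqref{eq:linear-sys} yields $\bar{\mathbf{u}}\,\bar{\mathbf{V}}_{t\times t}=\bar{\mathbf{v}}$ over $S:=\mathbb{F}_q[x]/(1+x^{\tau}+\cdots+x^{(p-1)\tau})$, where the bar denotes reduction. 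The determinant of $\bar{\mathbf{V}}_{t\times t}$ in $S$ is the reduction of $\det(\mathbf{V}_{t\times t})$; by hypothesis this determinant is coprime to $1+x^{\tau}+\cdots+x^{(p-1)\tau}$, hence by B\'ezout its image in $S$ is a unit.

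Next I would invoke the standard fact that a square matrix over a commutative ring with unit determinant is invertible (via the adjugate: $\mathbf{V}\cdot\operatorname{adj}(\mathbf{V})=\det(\mathbf{V})\,\mathbf{I}$, and multiplying by the inverse of $\det(\bar{\mathbf{V}})$ in $S$ gives a two-sided inverse of $\bar{\mathbf{V}}_{t\times t}$). Therefore $\bar{\mathbf{u}}=\bar{\mathbf{v}}\,\bar{\mathbf{V}}_{t\times t}^{-1}$ is uniquely determined in $S^t$. Unwinding the bar, this says precisely that any two solutions $\mathbf{u},\mathbf{u}'$ of Eq.~\eqref{eq:linear-sys} satisfy $u_j(x)\equiv u_j'(x)\bmod (1+x^{\tau}+\cdots+x^{(p-1)\tau})$ for every $j$, which is the claim.

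The one point needing a little care — and the place I expect the only real friction — is justifying that reduction modulo $1+x^{\tau}+\cdots+x^{(p-1)\tau}$ is compatible with the "$\bmod (1+x^{p\tau})$" in Eq.~\eqref{eq:linear-sys}, i.e.\ that the composite $\mathbb{F}_q[x]\to \mathcal{R}_{p\tau}(q)\to S$ is well-defined; this is immediate because $1+x^{p\tau}$ is divisible by $1+x^{\tau}+\cdots+x^{(p-1)\tau}$, so the first quotient's defining ideal maps into zero. Everything else is the adjugate identity plus B\'ezout over $\mathbb{F}_q[x]$, and no primality of $p$ or special structure of $\mathbf{V}_{t\times t}$ beyond the coprimality hypothesis is used. (The observation preceding the theorem, that $v_0(1)=\cdots=v_{t-1}(1)$ when the entries are powers of $x$, is not needed for this argument; it is a consistency remark for the application in the next subsection.)
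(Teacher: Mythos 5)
Your proof is correct. The core step is the same as the paper's -- the coprimality of $\det(\mathbf{V}_{t\times t})$ with $1+x^{\tau}+\cdots+x^{(p-1)\tau}$ makes the matrix invertible over $\mathbb{F}_q[x]/(1+x^{\tau}+\cdots+x^{(p-1)\tau})$, forcing the reduction of $\mathbf{u}$ to be unique there -- but your route to it is leaner. The paper first proves $\gcd(1+x^{\tau},\,1+x^{\tau}+\cdots+x^{(p-1)\tau})=1$ and sets up the full Chinese remainder isomorphism $\mathcal{R}_{p\tau}(q)\cong \mathbb{F}_q[x]/(1+x^{\tau})\times \mathbb{F}_q[x]/(1+x^{\tau}+\cdots+x^{(p-1)\tau})$, solves the two component systems separately, and reconstructs solutions via the explicit inverse map $\theta^{-1}$. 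You only use the projection onto the second factor, which is well defined merely because $1+x^{\tau}+\cdots+x^{(p-1)\tau}$ divides $1+x^{p\tau}$, so you never need the coprimality of the two factors at all; combined with the adjugate identity and B\'ezout, this already yields the stated uniqueness-modulo-$h$ claim. What the paper's heavier CRT machinery buys is not the theorem itself but the follow-up material: the parametrization of \emph{all} solutions (one per choice of residue modulo $1+x^{\tau}$), the reconstruction formula used in Lemma \ref{lm:linear-sys-cong} and Example \ref{exm:sys}, and the explicit recovery procedure in Section \ref{sec:lines}. For the theorem as stated, your argument is complete, and your closing observations (that the $v_j(1)$ remark is not needed and that no special structure of $\mathbf{V}_{t\times t}$ is used) are accurate.
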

\begin{proof}
Since
\[
1+x^{\tau}+\ldots+x^{(p-1)\tau}=(1+x^{\tau})(x^{\tau}+x^{3\tau}+\ldots+x^{(p-2)\tau})+1,
\]
we obtain that $\gcd(1+x^{\tau},1+x^{\tau}+\ldots+x^{(p-1)\tau})=1$.
By the Chinese remainder theorem, we have an isomorphism
between $\mathcal{R}_{p\tau}(q)$ and $\mathbb{F}_{q}[x]/(1+x^\tau)\times
\mathbb{F}_{q}[x]/(1+x^{\tau}+\ldots+x^{(p-1)\tau})$.  The mapping $\theta$ is defined by
\[
\theta(b(x))=(b(x)\bmod (1+x^\tau), b(x) \bmod (1+x^{\tau}+\ldots+x^{(p-1)\tau})),
\]
where $b(x)\in \mathcal{R}_{p\tau}(q)$.
The inverse mapping $\theta^{-1}$ is
\begin{align*}
\theta^{-1}(b_1(x),b_2(x))=&\big( b_1(x)\cdot ((1+x^\tau+\ldots+x^{(p-1)\tau}))+\\
&b_2(x)\cdot (x^\tau+x^{2\tau}+\ldots+x^{(p-1)\tau}) \big) \bmod (1+x^{p\tau}),
\end{align*}
where $b_1(x)\in \mathbb{F}_{q}[x]/(1+x^\tau)$ and $b_2(x)\in \mathbb{F}_{q}[x]/(1+x^{\tau}+\ldots+x^{(p-1)\tau})$.
It suffices to investigate the following two linear systems
\begin{eqnarray}
\mathbf{u}\mathbf{V}_{t\times t}=&\mathbf{v}\bmod (1+x^{\tau}),\label{eq:linear-sys1}\\
\mathbf{u}\mathbf{V}_{t\times t}=&\mathbf{v}\bmod (1+x^{\tau}+\ldots+x^{(p-1)\tau}).\label{eq:linear-sys2}
\end{eqnarray}
In Eq. \eqref{eq:linear-sys1}, there are many solutions $\mathbf{u}^{'}$, each of which is
a polynomial over $\mathbb{F}_{q}[x]$ with degree less than $\tau$.

For Eq. \eqref{eq:linear-sys2}, since the determinant of $\mathbf{V}_{t\times t}$
is invertible modulo $1+x^{\tau}+\ldots+x^{(p-1)\tau}$ by the assumption, we can solve
Eq. \eqref{eq:linear-sys2} to obtain the unique solution $\mathbf{u}^{''}$.
Given the solutions $u^{'}_i(x)\in\mathbb{F}_{q}[x]/(1+x^\tau)$ and
$u^{''}_i(x)\in \mathbb{F}_{q}[x]/(1+x^{\tau}+\ldots+x^{(p-1)\tau})$ for Eq. \eqref{eq:linear-sys1}
and Eq. \eqref{eq:linear-sys2}, respectively, we can calculate one
solution for Eq. \eqref{eq:linear-sys} via the isomorphism $\theta^{-1}$ as
\begin{align*}
\theta^{-1}(u^{'}_i(x),u^{''}_i(x))=&\big( u^{'}_i(x)\cdot ((1+x^\tau+\ldots+x^{(p-1)\tau}))+\\
&u^{''}_i(x)\cdot (x^\tau+x^{2\tau}+\ldots+x^{(p-1)\tau}) \big) \bmod (1+x^{p\tau}),
\end{align*}
where $u^{'}_i(x)$ is a polynomial over $\mathbb{F}_{q}[x]/(1+x^{\tau})$
and $u^{''}_i(x)$ is the entry $i$ of the unique solution to Eq. \eqref{eq:linear-sys2}.
Therefore, there are many solutions to Eq. \eqref{eq:linear-sys2}, all of which are
unique after being reduced modulo $1+x^\tau+\ldots+x^{(p-1)\tau}$.
\end{proof}

By Theorem \ref{thm:linear-sys-cong}, even if the matrix $\mathbf{V}_{t\times t}$
is invertible over $\mathbb{F}_{q}[x]/(1+x^{\tau}+\ldots+x^{(p-1)\tau})$, we also have
many solutions $\mathbf{u}$ to Eq. \eqref{eq:linear-sys2}. However, if the solutions
$\mathbf{u}$ are reduced modulo $1+x^\tau+\ldots+x^{(p-1)\tau}$, then the results
are congruent. In other words, if any $\tau$ coefficients of each polynomial in
$\mathbf{u}$ are known, then the solution $\mathbf{u}$ to Eq. \eqref{eq:linear-sys2}
is unique. We summarize the results in the following lemma (without proof).
\begin{lemma}
Assume that $\tau$ coefficients of the polynomial $u_i(x)$ are known for $i=0,1,\ldots,t-1$
and the determinant of $\mathbf{V}_{t\times t}$ and $1+x^{\tau}+\cdots+x^{(p-1)\tau}$
are relatively prime. Then, Eq. \eqref{eq:linear-sys}
has a unique solution $\mathbf{u}$.
\label{lm:linear-sys-cong}
\end{lemma}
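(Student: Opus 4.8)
The plan is to combine Theorem~\ref{thm:linear-sys-cong} with a straightforward counting argument in the Chinese-remainder decomposition. Recall from the proof of Theorem~\ref{thm:linear-sys-cong} that the hypothesis $\gcd(\det\mathbf{V}_{t\times t},1+x^{\tau}+\cdots+x^{(p-1)\tau})=1$ together with $\gcd(1+x^{\tau},1+x^{\tau}+\cdots+x^{(p-1)\tau})=1$ gives the isomorphism $\mathcal{R}_{p\tau}(q)\cong\mathbb{F}_{q}[x]/(1+x^{\tau})\times\mathbb{F}_{q}[x]/(1+x^{\tau}+\cdots+x^{(p-1)\tau})$, and that under this decomposition Eq.~\eqref{eq:linear-sys} splits into the two systems Eq.~\eqref{eq:linear-sys1} and Eq.~\eqref{eq:linear-sys2}. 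First I would note that Eq.~\eqref{eq:linear-sys2} has a \emph{unique} solution $\mathbf{u}''$ because $\det\mathbf{V}_{t\times t}$ is invertible modulo $1+x^{\tau}+\cdots+x^{(p-1)\tau}$; this is already established in the proof of the preceding theorem. Hence the only freedom in a solution $\mathbf{u}$ of Eq.~\eqref{eq:linear-sys} lies entirely in the component $\mathbf{u}'$ living in $\mathbb{F}_{q}[x]/(1+x^{\tau})^{t}$, i.e.\ exactly the statement that all solutions are congruent modulo $1+x^{\tau}+\cdots+x^{(p-1)\tau}$.

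Next I would translate the side information ``$\tau$ coefficients of each $u_i(x)$ are known'' into a statement about $\mathbf{u}'$. Writing $u_i(x)=u_i'(x)\cdot(1+x^{\tau}+\cdots+x^{(p-1)\tau})+u_i''(x)\cdot(x^{\tau}+x^{2\tau}+\cdots+x^{(p-1)\tau})\bmod(1+x^{p\tau})$ via $\theta^{-1}$, and using that $u_i''(x)$ is already pinned down by Eq.~\eqref{eq:linear-sys2}, the $p\tau$ coefficients of $u_i(x)$ are an affine function of the $\tau$ unknown coefficients of $u_i'(x)$. The key linear-algebra observation is that this affine map is injective on the level of coefficients: reducing $u_i(x)$ modulo $1+x^{\tau}$ recovers $u_i'(x)$ exactly (since $1+x^{\tau}+\cdots+x^{(p-1)\tau}\equiv p\equiv 1\bmod(1+x^{\tau})$ when $q$ is a power of $2$, so the first summand reduces to $u_i'(x)$ and the second vanishes). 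Consequently, any $\tau$ \emph{independent} coefficients of $u_i(x)$ — more precisely, coefficients whose values, combined with the known $u_i''(x)$, determine $u_i(x)\bmod(1+x^{\tau})$ — suffice to determine $u_i'(x)$, and therefore all of $\mathbf{u}$. Since $u_i(x)$ has $p\tau$ coefficients obtained by summing the $\tau$ coefficients of $u_i'(x)$ in $p$ overlapping blocks of length $\tau$, knowing any $\tau$ coefficients that form a full set of residues modulo $\tau$ determines $u_i'(x)$; this is the content that makes ``$\tau$ coefficients'' the right count.

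I would then assemble these pieces: given the $\tau$ known coefficients of each $u_i(x)$, solve Eq.~\eqref{eq:linear-sys2} for the unique $\mathbf{u}''$, subtract off its contribution, deduce $u_i'(x)\bmod(1+x^{\tau})$ for each $i$ from the known coefficients, and recombine through $\theta^{-1}$ to obtain the unique $\mathbf{u}$ consistent with both Eq.~\eqref{eq:linear-sys} and the side information. The main obstacle I anticipate is making precise \emph{which} choices of $\tau$ coefficients actually suffice — a generic set of $\tau$ positions need not hit every residue class modulo $\tau$, so the cleanest route is to observe that the coefficient at position $j$ of $u_i(x)$ equals the coefficient at position $j\bmod\tau$ of $u_i'(x)$ plus a known quantity coming from $u_i''(x)$; then any $\tau$ positions that are pairwise incongruent modulo $\tau$ work, and this is the genuinely substantive step behind the lemma. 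The rest is bookkeeping inside the already-established isomorphism, which is why the paper states the lemma without proof.
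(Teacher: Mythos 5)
Your proof is correct and follows exactly the route the paper intends: the lemma is stated without proof as a direct consequence of Theorem~\ref{thm:linear-sys-cong}, whose Chinese-remainder decomposition you reuse, with uniqueness modulo $1+x^{\tau}+\cdots+x^{(p-1)\tau}$ leaving only the $\tau$ coefficients of each $u_i'(x)$ undetermined. Your added observation that the $\tau$ known positions must be pairwise incongruent modulo $\tau$ is a genuine (and correct) sharpening of the lemma's loose phrasing ``any $\tau$ coefficients''; it is satisfied in the paper's application, where the known coefficients occupy at least $\tau$ consecutive positions.
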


Theorem 1 in \cite{Hou2018form} is a special case of Theorem \ref{thm:linear-sys-cong}
with $\tau=1$ and $\mathbf{V}_{t\times t}$ a Vandermonde matrix.

According to Lemma \ref{lm:linear-sys-cong}, we can obtain the unique solution for
Eq. \eqref{eq:linear-sys} if we know any $\tau$ coefficients of the polynomial $u_i(x)$,
given that the condition in Lemma \ref{lm:linear-sys-cong} holds. In most of the cases,
we can obtain the unique solution if less than $\tau$ coefficients are known, since
we can calculate all the other coefficients of $u^{'}_i(x)$ when some coefficients are known.

\begin{example}
\label{exm:sys}
Consider an example of $p=\tau=3$ and $t=q=2$. Let $\mathbf{v}=[1+x+x^2+x^5, \text{ }1+x^3+x^7+x^8]$,
we want to solve $\mathbf{u}=[u_0(x),\text{ }u_1(x)]\in\mathcal{R}_{9}^2$ from the following linear system
\begin{equation}
\begin{bmatrix}
u_0(x)& u_1(x)
\end{bmatrix}\cdot\begin{bmatrix}
1& 1\\
1& x\\
\end{bmatrix}=
\begin{bmatrix}
1+x+x^2+x^5& 1+x^3+x^7+x^8
\end{bmatrix}\bmod (1+x^9).
\label{eq:exm}
\end{equation}
By Theorem \ref{thm:linear-sys-cong}, we can first solve
$\mathbf{u}^{'}=[u_0^{'}(x),\text{ }u_1^{'}(x)]\in\mathcal{R}_{3}^2$ from the following linear system
\begin{equation}
\begin{bmatrix}
u_0^{'}(x)& u_1^{'}(x)
\end{bmatrix}\cdot\begin{bmatrix}
1& 1\\
1& x\\
\end{bmatrix}=
\begin{bmatrix}
1+x+x^2+x^5& 1+x^3+x^7+x^8
\end{bmatrix}\bmod (1+x^3),
\label{eq:exm1}
\end{equation}
and solve
$\mathbf{u}^{''}=[u_0^{''}(x),\text{ }u_1^{''}(x)]\in\mathbb{F}^2_{2}[x]/(1+x^3+x^6)$ from the following linear system
\begin{equation}
\begin{bmatrix}
u_0^{''}(x)& u_1^{''}(x)
\end{bmatrix}\cdot\begin{bmatrix}
1& 1\\
1& x\\
\end{bmatrix}=
\begin{bmatrix}
1+x+x^2+x^5& 1+x^3+x^7+x^8
\end{bmatrix}\bmod (1+x^3+x^6).
\label{eq:exm2}
\end{equation}
For Eq. \eqref{eq:exm1}, we have two solutions, i.e., $\mathbf{u}^{'}=[0,\text{ }1+x]$
or $\mathbf{u}^{'}=[1+x+x^2,\text{ }x^2]$.
For Eq. \eqref{eq:exm2}, we have the unique solution $\mathbf{u}^{''}=[1+x+x^2+x^3+x^5,\text{ }x^3]$.
Then, we can obtain the two solutions in Eq. \eqref{eq:exm} as
$\mathbf{u}=[1+x^2+x^3+x^4+x^5+x^7,\text{ }x+x^3+x^4+x^7]$ or $\mathbf{u}=[x+x^6+x^8,\text{ }1+x^2+x^5+x^6+x^8]$.
Therefore, we only have two solutions in Eq. \eqref{eq:exm} and we can obtain the unique
solution if any coefficient of $u_0(x)$ or $u_1(x)$ is known.
\end{example}
\subsection{Recovery of Erased Lines for Any Slope}
Our idea is based on the observation that if we view the erased lines as polynomials over
$\mathcal{R}_{p\tau}$, then we can obtain a linear system over $\mathcal{R}_{p\tau}$
according to the parity-check matrix in Eq. \eqref{eq:matrixH2}.

\begin{example}
\label{exm:sploe}
Consider the $\textsf{GEBR}(p=3,\tau=3,k=4,r=2,q,g(x)=1)$ code. We have
$k+r=6$ polynomials $s_j(x)=\sum_{\ell=0}^{8}s_{\ell,j}x^{\ell}$ with $j=0,1,\ldots,5$
and the parity-check matrix is
\begin{equation}
\begin{bmatrix}
1 & 1 & 1 & 1 & 1 & 1\\
1 & x & x^2 & x^3 & x^4 & x^5\\
\end{bmatrix}.
\label{eq:parity-exm}
\end{equation}
From the first row of the matrix in Eq. \eqref{eq:parity-exm}, the summation of the
six symbols in each row (lines of slope zero) of the $9\times 6$ array is zero, i.e.,
\[
s_{\ell,0}+s_{\ell,1}+s_{\ell,2}+s_{\ell,3}+s_{\ell,4}+s_{\ell,5}=0,
\]
for $\ell=0,1,\ldots,8$. Similarly, by the second row of the matrix in Eq. \eqref{eq:parity-exm},
the summation of the symbols in each line of slope one is zero,
i.e.,
\[
s_{\ell,0}+s_{\ell-1,1}+s_{\ell-2,2}+s_{\ell-3,3}+s_{\ell-4,4}+s_{\ell-5,5}=0,
\]
for $\ell=0,1,\ldots,8$. Note that the indices are taken modulo $m=9$ in the example.
Suppose that four lines $e_1=0,e_2=1,e_3=3,e_4=4$ of slope $i=2$ are erased, i.e., the
following 24 symbols
\begin{align*}
&s_{0,0},s_{7,1},s_{5,2},s_{3,3},s_{1,4},s_{8,5},\\
&s_{1,0},s_{8,1},s_{6,2},s_{4,3},s_{2,4},s_{0,5},\\
&s_{3,0},s_{1,1},s_{8,2},s_{6,3},s_{4,4},s_{2,5},\\
&s_{4,0},s_{2,1},s_{0,2},s_{7,3},s_{5,4},s_{3,5},
\end{align*}
are erased.
For $\ell=0,1,\ldots,8$, we represent the six symbols $s_{\ell,0},s_{\ell-2,1},s_{\ell-4,2},
s_{\ell-4,3},s_{\ell-6,4},s_{\ell-8,5}$ of slope two by the polynomial
\[
\bar{s}_{\ell}(x)=s_{\ell,0}+s_{\ell-2,1}x+s_{\ell-4,2}x^2+
s_{\ell-6,3}x^3+s_{\ell-8,4}x^4+s_{\ell-10,5}x^5+s_{\ell-12,6}x^6+s_{\ell-14,7}x^7+s_{\ell-16,8}x^8,
\]
over $\mathbb{F}_{q}[x]/(1+x^{9})$, where $s_{\ell,6}=s_{\ell,7}=s_{\ell,8}=0$ for all
$\ell=0,1,\ldots,8$. If we can recover the four polynomials
$\bar{s}_{0}(x),\bar{s}_{1}(x),\bar{s}_{3}(x),\bar{s}_{4}(x)$, then the erased 24 symbols
are recovered.
By Eq. \eqref{eq:tau-eqn} with $p=\tau=3$ and $\mu=0,1$, we have
\begin{align*}
s_{0,j}+s_{3,j}=&s_{6,j},\\
s_{1,j}+s_{4,j}=&s_{7,j},
\end{align*}
and further obtain that
\begin{eqnarray}
\bar{s}_{0}(x)+\bar{s}_{3}(x)&=&\bar{s}_{6}(x),\label{eq:exm-col0}\\
\bar{s}_{1}(x)+\bar{s}_{4}(x)&=&\bar{s}_{7}(x).\label{eq:exm-col1}
\end{eqnarray}
According to the first row of Eq. \eqref{eq:parity-exm}, we have
\begin{align*}
0=&s_0(x)+s_1(x)+s_2(x)+s_3(x)+s_4(x)+s_5(x)\\
=&(s_{0,0}+s_{1,0}x+\cdots+s_{8,0}x^{8})+(s_{0,1}+s_{1,1}x+\cdots+s_{8,1}x^{8})+
\cdots+(s_{0,5}+s_{1,5}x+\cdots+s_{8,5}x^{8})\\
=&(s_{0,0}+s_{1,0}x+\cdots+s_{8,0}x^{8})+(s_{0,1}+s_{1,1}x+\cdots+s_{8,1}x^{8})+
\cdots+(s_{0,8}+s_{1,8}x+\cdots+s_{8,8}x^{8})\\
=&(s_{0,0}+s_{0,1}+\cdots+s_{0,8})+x(s_{1,0}+s_{1,1}+\cdots+s_{1,8})+
\cdots+x^{8}(s_{8,0}+s_{8,1}+\cdots+s_{8,8}),
\end{align*}
and therefore,
\begin{equation}
s_{\ell,0}+s_{\ell,1}+\cdots+s_{\ell,8}=0,
\label{eq:exa-slope0}
\end{equation}
for $\ell=0,1,\ldots,8$. Similarly, by the second row of Eq. \eqref{eq:parity-exm}, we have
\begin{align*}
0=&s_0(x)+xs_1(x)+x^2s_2(x)+\cdots+x^5s_5(x)\\
=&s_0(x)+xs_1(x)+\cdots+x^5s_5(x)+x^6s_6(x)+x^7s_7(x)+x^8s_8(x) \text{ since $s_6(x)=s_7(x)=s_8(x)=0$ }\\
=&(s_{0,0}+s_{1,0}x+\cdots+s_{8,0}x^{8})+x(s_{0,1}+s_{1,1}x+\cdots+s_{8,1}x^{8})+\\
&\cdots+x^8(s_{0,8}+s_{1,8}x+\cdots+s_{8,8}x^{8})\\
=&(s_{0,0}+s_{8,1}+\cdots+s_{1,8})+x(s_{1,0}+s_{0,1}+\cdots+s_{2,8})+\\
&\cdots+x^8(s_{8,0}+s_{7,1}+\cdots+s_{0,8}),
\end{align*}
and further obtain that
\begin{equation}
s_{\ell,0}+s_{\ell-1,1}+\cdots+s_{\ell-8,8}=0,
\label{eq:exa-slope1}
\end{equation}
for $\ell=0,1,\ldots,8$.
Then, we can compute that
\begin{eqnarray}
&&\bar{s}_{0}(x)+x^4\bar{s}_{1}(x)+x^{8}\bar{s}_{2}(x)+\cdots+
x^{32}\bar{s}_{8}(x)\nonumber\\
&=&(s_{0,0}+s_{7,1}x+s_{5,2}x^2+s_{3,3}x^3+s_{1,4}x^4+s_{8,5}x^5+s_{6,6}x^6+s_{4,7}x^7+s_{2,8}x^8)+\nonumber\\
&&x^4(s_{1,0}+s_{8,1}x+s_{6,2}x^2+s_{4,3}x^3+s_{2,4}x^4+s_{0,5}x^5+s_{7,6}x^6+s_{5,7}x^7+s_{3,8}x^8)+\cdots+\nonumber\\
&&x^{5}(s_{8,0}+s_{6,1}x+s_{4,2}x^2+s_{2,3}x^3+s_{0,4}x^4+s_{7,5}x^5+s_{5,6}x^6+s_{3,7}x^7+s_{1,8}x^8)\nonumber\\
&=&(s_{0,0}+s_{0,5}+\cdots+s_{0,4})+x(s_{7,1}+s_{7,6}+\cdots+s_{7,5})+\cdots+
x^{8}(s_{2,8}+s_{2,4}+\cdots+s_{2,3})\nonumber\\
&=&0,\label{eq:exm-row1}
\end{eqnarray}
where the last equation comes from Eq. \eqref{eq:exa-slope0}.
Similarly, we can compute that
\begin{eqnarray}
&&\bar{s}_{0}(x)+x^8\bar{s}_{1}(x)+x^{16}\bar{s}_{2}(x)+\cdots+
x^{64}\bar{s}_{8}(x)\nonumber\\
&=&(s_{0,0}+s_{7,1}x+s_{5,2}x^2+s_{3,3}x^3+s_{1,4}x^4+s_{8,5}x^5+s_{6,6}x^6+s_{4,7}x^7+s_{2,8}x^8)+\nonumber\\
&&x^8(s_{1,0}+s_{8,1}x+s_{6,2}x^2+s_{4,3}x^3+s_{2,4}x^4+s_{0,5}x^5+s_{7,6}x^6+s_{5,7}x^7+s_{3,8}x^8)+\cdots+\nonumber\\
&&x(s_{8,0}+s_{6,1}x+s_{4,2}x^2+s_{2,3}x^3+s_{0,4}x^4+s_{7,5}x^5+s_{5,6}x^6+s_{3,7}x^7+s_{1,8}x^8)\nonumber\\
&=&(s_{0,0}+s_{8,1}+\cdots+s_{1,8})+x(s_{7,1}+s_{6,2}+\cdots+s_{8,0})+\cdots+
x^{8}(s_{2,8}+s_{1,0}+\cdots+s_{3,7})\nonumber\\
&=&0,\label{eq:exm-row2}
\end{eqnarray}
where the last equation comes from Eq. \eqref{eq:exa-slope1}.
Combine with Eq. \eqref{eq:exm-col0}, Eq. \eqref{eq:exm-col1}, Eq. \eqref{eq:exm-row1}
and Eq. \eqref{eq:exm-row2}, we have
\begin{align*}
\begin{bmatrix}
\bar{s}_{e_1}(x) & \bar{s}_{e_2}(x) & \bar{s}_{e_3}(x) & \bar{s}_{e_4}(x)
\end{bmatrix}\begin{bmatrix}
1 & 0 & 1 & 1\\
0 & 1 & x^{4} & x^{8}\\
1 & 0 & x^{12} & x^{24}\\
0 & 1 & x^{16} & x^{32}\\
\end{bmatrix}=\begin{bmatrix}
\bar{s}_{6}(x) \\
\bar{s}_{7}(x) \\
\sum_{\ell\in\{2,5,6,7,8\}}x^{4\ell}\bar{s}_{\ell}(x) \\
\sum_{\ell\in\{2,5,6,7,8\}}x^{8\ell}\bar{s}_{\ell}(x) \\
\end{bmatrix}^T \bmod (1+x^9).
\end{align*}
Note that the determinant of the above $4\times 4$ matrix is $x+x^2+x^5+x^7$
after modulo $1+x^9$,
which is invertible over $\mathbb{F}_{q}[x]/(1+x^{3}+x^{6})$.
By Theorem \ref{thm:linear-sys-cong}, we can first solve the unique solution
$\bar{s}^{''}_{e_1}(x),\bar{s}^{''}_{e_2}(x),\bar{s}^{''}_{e_3}(x),\bar{s}^{''}_{e_4}(x)$
over $\mathbb{F}_{q}[x]/(1+x^{3}+x^{6})$ and then obtain
$\bar{s}_{i}(x)=\big( \bar{s}^{'}_i(x)\cdot ((1+x^3+x^{6}))+
\bar{s}^{''}_i(x)\cdot (x^3+x^{6}) \big) \bmod (1+x^{9})$,
where $i\in\{e_1,e_2,e_3,e_4\}$, $\bar{s}^{'}_i(x)\in\mathbb{F}_{q}[x]/(1+x^{3})$.
Note that the coefficients of degrees larger than five of the polynomial $\bar{s}_{i}(x)$
is zero, we can always find one unique polynomial $\bar{s}^{'}_i(x)\in\mathbb{F}_{q}[x]/(1+x^{3})$
for any polynomial $\bar{s}^{''}_i(x)\cdot (x^3+x^{6}) \bmod (1+x^{9})$ by Lemma \ref{lm:linear-sys-cong} such that
$\bar{s}_{6,i}=\bar{s}_{7,i}=\bar{s}_{8,i}=0$, where
$\bar{s}_{i}(x)=\sum_{\ell=0}^{8}\bar{s}_{\ell,i}$.
Specifically, let
\[
\sum_{\ell=0}^{8}a_{\ell,i}x^{\ell}=\big(\bar{s}^{''}_i(x)\cdot (x^3+x^{6}) \bmod (1+x^{9})\big),
\]
then $\bar{s}^{'}_i(x)=a_{6,i}+a_{7,i}x+a_{8,i}x^2$ and we have
\begin{align*}
\bar{s}_{i}(x)=&\bar{s}^{'}_i(x)\cdot ((1+x^3+x^{6}))+
\bar{s}^{''}_i(x)\cdot (x^3+x^{6}) \bmod (1+x^{9})\\
=&a_{6,i}+a_{7,i}x+a_{8,i}x^2+a_{6,i}x^3+a_{7,i}x^4+a_{8,i}x^5+a_{6,i}x^6+a_{7,i}x^7+a_{8,i}x^8
+\sum_{\ell=0}^{8}a_{\ell,i}x^{\ell}\\
=&\sum_{\ell=0}^{2}(a_{\ell,i}+a_{6+\ell,i})x^{\ell}+\sum_{\ell=3}^{5}(a_{\ell,i}+a_{3+\ell,i})x^{\ell},
\end{align*}
of which the coefficients of degrees larger than five are zero.
Therefore, we can recover the erased four lines.
\end{example}

Assume that $t$ lines $e_1,e_2,\ldots,e_t$ of slope $i$ are erased,
where $r\leq t\leq \tau+r$ and $0\leq i\leq p\tau-1$. We will present a sufficient condition for
recovering the $t$ erased lines.

In the $p\tau\times (k+r)$ array of the codes $\textsf{GEBR}(p,\tau,k,r,q,1)$, we have
$p\tau$ lines $0,1,\ldots,p\tau-1$ of slope $i$. We divide the $p\tau$ lines into
$\tau$ groups, for $g=0,1,\ldots,\tau-1$ such that group $g$ contains
$p$ lines $g,g+\tau,\ldots,g+(p-1)\tau$.
Let $\Theta$ be the subset of the $\tau$ groups
$\{0,1,\ldots,\tau-1\}$ containing the $t$ erased lines
and let $\eta=|\Theta|$ be the cardinality of $\Theta$.
For $g=0,1,\ldots,\eta-1$ and $h=1,2,\ldots,t$, we define $\mathbf{v}_g$ as the
row vector of length $t$ such that the entry $h$ is one if $e_h$ is in group
$\theta_g$ and zero otherwise, where $\Theta=\{\theta_0,\theta_1,\ldots,\theta_{\eta-1}\}$.
By representing the symbols in line $\ell$ with $\ell=0,1,\ldots,p\tau-1$ of slope $i$
by polynomial $\bar{s}_{\ell}(x)$, we can obtain the following polynomials
\[
\begin{bmatrix}
\mathbf{v}_0\\
\mathbf{v}_1\\
\vdots\\
\mathbf{v}_{\eta-1}\\
\end{bmatrix}
\begin{bmatrix}
\bar{s}_{e_1}(x)\\
\bar{s}_{e_2}(x)\\
\vdots\\
\bar{s}_{e_t}(x)\\
\end{bmatrix}
\]
according to Eq. \eqref{eq:tau-eqn}.
In Example \ref{exm:sploe}, the erased $t=4$ lines are located in $\eta=2$ groups $0,1$,
so we have $\Theta=\{0,1\}$, $\mathbf{v}_0=[1,0,1,0]$, $\mathbf{v}_1=[0,1,0,1]$,
obtaining
\[
\begin{bmatrix}
1& 0& 1& 0\\
0& 1& 0& 1\\
\end{bmatrix}
\begin{bmatrix}
\bar{s}_{0}(x)\\
\bar{s}_{1}(x)\\
\bar{s}_{3}(x)\\
\bar{s}_{4}(x)\\
\end{bmatrix}
\]
as in Eq. \eqref{eq:exm-col0} and Eq. \eqref{eq:exm-col1}.

Let $\mathcal{I}$ be the subset of $\{i,i-1,\ldots,i-(r-1)\}$ such that $i-\ell$ is a
multiple of $p$ for $\ell=0,1,\ldots,r-1$, where $i\in\{0,1,\ldots,p\tau-1\}$.
Let $\bar{\mathcal{I}}=\{i,i-1,\ldots,i-(r-1)\}\setminus
\mathcal{I}=\{g_1,g_2,\ldots,g_{|\bar{\mathcal{I}}|}\}$,
where $|\bar{\mathcal{I}}|$ is the cardinality of $\bar{\mathcal{I}}$. We have that
$|\bar{\mathcal{I}}|\geq r-\lceil \frac{r}{p}\rceil$.
In Example \ref{exm:sploe}, we have $i=2$ and $r=2$. Then, we have
$\mathcal{I}=\emptyset$ and $\bar{\mathcal{I}}=\{g_1,g_2\}=\{1,2\}$.
With the notation defined above, we can formulate some linear equations over
$\mathcal{R}_{p\tau}$ according to Eq. \eqref{eq:matrixH2}.

If $t\leq \tau$ and $\eta=t$, i.e, the erased $t$ lines are located
in $t$ groups, each having
one erased line. Then we can recover the erased $t$ lines one-by-one by Eq. \eqref{eq:tau-eqn}.
Otherwise, if $\eta<t$, it is possible to recover at most $\eta+r$ erased lines under some
condition, which will be shown in the next theorem.

\begin{theorem}
Let $\tau=p^{\nu}$ and $k+r\leq (p-1)\tau$, where $\nu\geq 0$.
The codes $\textsf{GEBR}(p,\tau,k,r,q,1)$
can recover $t$ erased lines $e_1,e_2,\ldots,e_t$ of slope $i$, if the following matrix
\begin{equation}
\mathbf{G}_{t\times t}=\begin{bmatrix}
 &  & \mathbf{v}_0 & \\
 &  & \mathbf{v}_1 & \\
 &  & \vdots & \\
 &  & \mathbf{v}_{\eta-1} & \\
x^{e_1(p\tau-g_1)^{-1}} & x^{e_2(p\tau-g_1)^{-1}} & \cdots & x^{e_t(p\tau-g_1)^{-1}}\\
x^{e_1(p\tau-g_2)^{-1}} & x^{e_2(p\tau-g_2)^{-1}} & \cdots & x^{e_t(p\tau-g_2)^{-1}}\\
\vdots & \vdots & \ddots & \vdots \\
x^{e_1(p\tau-g_{|\bar{\mathcal{I}}|})^{-1}} & x^{e_2(p\tau-g_{|\bar{\mathcal{I}}|})^{-1}} &
\cdots & x^{e_t(p\tau-g_{|\bar{\mathcal{I}}|})^{-1}}\\
\end{bmatrix}
\label{eq:thm-rec}
\end{equation}
is invertible over $\mathbb{F}_{q}[x]/(1+x^{\tau}+\ldots+x^{(p-1)\tau})$,
where $t=\eta+|\bar{\mathcal{I}}|$ and $\ell^{-1}\ell=1\bmod p\tau$
for integers $1\leq \ell,\ell^{-1}\leq p\tau-1$.
\label{thm:reco-tau1}
\end{theorem}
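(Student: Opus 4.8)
The plan is to reduce the recovery problem to the linear system framework developed in Theorem~\ref{thm:linear-sys-cong} and Lemma~\ref{lm:linear-sys-cong}, exactly as carried out in Example~\ref{exm:sploe}. First I would set up the $t$ unknown polynomials $\bar{s}_{e_1}(x),\ldots,\bar{s}_{e_t}(x)\in\mathcal{R}_{p\tau}$ representing the $k+r$ symbols of each erased line of slope $i$ (padding columns $k,\ldots,p\tau-1$ with zeros). The first block of $\eta$ equations comes from Eq.~\eqref{eq:tau-eqn}: within each group $\theta_g$ of $p$ lines $\{g,g+\tau,\ldots,g+(p-1)\tau\}$ the polynomials sum to something expressible in terms of \emph{non-erased} lines, which yields the rows $\mathbf{v}_0,\ldots,\mathbf{v}_{\eta-1}$ on the left and known quantities on the right. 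The remaining $|\bar{\mathcal{I}}|$ equations come from the parity-check matrix Eq.~\eqref{eq:matrixH2}: for each row $\ell\in\{0,1,\ldots,r-1\}$ we have $\sum_{j=0}^{k+r-1}x^{\ell j}s_j(x)=0\bmod(1+x^{p\tau})$, and rewriting this coordinate-wise in terms of the slope-$i$ polynomials $\bar{s}_0(x),\ldots,\bar{s}_{p\tau-1}(x)$ produces an identity $\sum_{m=0}^{p\tau-1}x^{c_m}\bar{s}_m(x)=0$ whose exponents $c_m$ I would compute explicitly. I expect this bookkeeping — tracking how a slope-$i$ line's symbols redistribute across the slope-$0$ parity-check row $\ell$ after the shift $x^{\ell j}$, and isolating the exponent $e_h(p\tau-g_h)^{-1}$ attached to each erased line — to be the main technical obstacle, and it is where the hypothesis $i-\ell$ being (or not being) a multiple of $p$ enters: when $i-\ell\equiv 0\pmod p$ the corresponding equation degenerates to the group-sum relation already counted in the $\mathbf{v}_g$ block, which is precisely why only $\ell\in\bar{\mathcal{I}}$ contributes new rows.

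Once the system $[\bar{s}_{e_1}(x)\;\cdots\;\bar{s}_{e_t}(x)]\,\mathbf{G}_{t\times t}=\mathbf{w}\bmod(1+x^{p\tau})$ is assembled with $\mathbf{G}_{t\times t}$ as in Eq.~\eqref{eq:thm-rec} and $\mathbf{w}$ a vector of known polynomials, I would invoke the hypothesis that $\det\mathbf{G}_{t\times t}$ is invertible modulo $1+x^{\tau}+\cdots+x^{(p-1)\tau}$. This is exactly the setting of Theorem~\ref{thm:linear-sys-cong}, so all solutions agree modulo $1+x^{\tau}+\cdots+x^{(p-1)\tau}$; to pin down a \emph{unique} solution in $\mathcal{R}_{p\tau}$ I would use Lemma~\ref{lm:linear-sys-cong} together with the extra $\tau$ coordinates of information we possess about each $\bar{s}_{e_h}(x)$. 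Those known coordinates are the $p\tau-\alpha\ge p\tau-(p-1)\tau=\tau$ zero entries in rows $\alpha,\alpha+1,\ldots,p\tau-1$ of each erased line (the padding), which is why the condition $k+r\le(p-1)\tau$, i.e. $\alpha\le(p-1)\tau$, is assumed; this supplies the $\tau$ coefficients per polynomial that Lemma~\ref{lm:linear-sys-cong} requires. Solving modulo $1+x^{\tau}+\cdots+x^{(p-1)\tau}$ and then lifting via the isomorphism $\theta^{-1}$ while forcing the padding coordinates to vanish — as done in the final computation of Example~\ref{exm:sploe} — recovers each $\bar{s}_{e_h}(x)$ uniquely, hence all $t(k+r)$ erased symbols.

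It remains to justify two structural points. One is that the right-hand vector $\mathbf{w}$ depends only on surviving symbols: the group-sum rows use Eq.~\eqref{eq:tau-eqn} applied to lines outside $\{e_1,\ldots,e_t\}$ within each erased group (plus the erased lines themselves on the left), and the parity rows collect $x^{c_m}\bar{s}_m(x)$ over all non-erased $m$; since the code is assumed $(n,k)$ recoverable and $\tau=p^\nu$ with $k+r\le(p-1)\tau\le p^{\nu+1}$, Theorem~\ref{thm:mds3}(i) guarantees consistency. The second is the count $t=\eta+|\bar{\mathcal{I}}|$, which simply records that we have $\eta$ independent group relations and $|\bar{\mathcal{I}}|$ independent parity relations; the theorem's conclusion is conditional on $\mathbf{G}_{t\times t}$ being invertible, so no further argument about genericity is needed here. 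I would close by noting, as in Example~\ref{exm:sploe}, that in practice fewer than $\tau$ known coordinates often suffice because the padding structure over-determines $\bar{s}^{'}_{e_h}(x)\in\mathbb{F}_q[x]/(1+x^\tau)$, but the clean sufficient statement is the one given.
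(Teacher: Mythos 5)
Your proposal follows essentially the same route as the paper's proof: represent each line of slope $i$ by a polynomial $\bar{s}_\ell(x)\in\mathcal{R}_{p\tau}$, extract $\eta$ equations from the local-parity relation \eqref{eq:tau-eqn} and $|\bar{\mathcal{I}}|$ equations from the rows of the parity-check matrix \eqref{eq:matrixH2} (the latter existing exactly when $(p\tau-i+\ell)^{-1}$ exists modulo $p\tau$, i.e.\ when $i-\ell$ is not a multiple of $p$), assemble $\mathbf{G}_{t\times t}$, and solve via Theorem~\ref{thm:linear-sys-cong} and Lemma~\ref{lm:linear-sys-cong} using $\tau$ known coefficients of each unknown polynomial. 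One point needs correcting. In your last paragraph you locate those $\tau$ known coefficients in ``rows $\alpha,\ldots,p\tau-1$'' of each erased line and attribute the hypothesis to ``$\alpha\le(p-1)\tau$.'' In a GEBR array those rows hold local parity symbols, which are generally nonzero and are erased along with the rest of the line, so they supply no known values; moreover $\alpha\le(p-1)\tau$ is part of the code definition, not the extra hypothesis of this theorem. The correct source is the one you stated in your first paragraph: $\bar{s}_\ell(x)$ has one coefficient per \emph{column}, hence degree at most $k+r-1$, so its top $p\tau-(k+r)$ coefficients are identically zero, and the hypothesis $k+r\le(p-1)\tau$ is exactly what guarantees that this count is at least $\tau$, as Lemma~\ref{lm:linear-sys-cong} requires. (A smaller quibble: when $i-\ell\equiv 0\bmod p$ the paper simply observes that the required inverse does not exist and no equation of the form \eqref{eq:rec-key} is available, rather than that the equation degenerates to a group-sum relation; your phrasing is a plausible heuristic but is not what the argument needs.) With the row/column confusion repaired, your outline is the paper's proof.
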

\begin{proof}
Assume that $t=\eta+|\bar{\mathcal{I}}|$ lines $e_1,e_2,\ldots,e_t$
of slope $i$ are erased, where $0\leq e_1 <e_2<\cdots <e_t\leq
p\tau-1$ and $0\leq i\leq p\tau-1$.
For $\ell=0,1,\ldots,p\tau-1$, we represent the $k+r$ symbols
$s_{\ell,0},s_{\ell-i,1},s_{\ell-2i,2},\ldots,s_{\ell-i(k+r-1),k+r-1}$
in line $\ell$ of slope $i$ by the polynomial
\[
\bar{s}_{\ell}(x)=s_{\ell,0}+s_{\ell-i,1}x+s_{\ell-2i,2}x^2+\ldots+s_{\ell-i(k+r-1),k+r-1}x^{k+r-1},
\]
which is in $\mathcal{R}_{p\tau}$\footnote{We can set
$s_{\ell-i(k+r),k+r}= s_{\ell-i(k+r+1),k+r+1}= \ldots =
s_{\ell-i(p\tau-1),p\tau-1}=0$.}.

According to the first row of the matrix in Eq. \eqref{eq:matrixH2},
we have that the summation of the $k+r$ symbols in the line of slope
zero is zero, i.e.,
\begin{equation}
s_{j,0}+s_{j,1}+s_{j,2}+\cdots+s_{j,k+r-1}=0
\label{eq:thm-rec-sum0}
\end{equation}
for $j=0,1,\ldots,p\tau-1$. Note that all the indices are taken modulo $p\tau$ in the proof
and $s_{j,k+r},s_{j,k+r+1},\cdots,s_{j,p-1}$ are all zero.
If $\gcd(i,p\tau)=\gcd(i,p^{\nu+1})=1$, then $\gcd(i,p)=\gcd(i,p\tau-i)=1$ and
\begin{eqnarray}
&&(p\tau-i)^{-1}\cdot (p\tau-i)=1\bmod p\tau \iff \nonumber\\
&&i\cdot(p\tau-i)^{-1}\cdot (p\tau-i)\cdot(p\tau-i)^{-1}=i\cdot(p\tau-i)^{-1}\bmod p\tau \iff \nonumber\\
&&i\cdot(p\tau-i)^{-1}=i\cdot(i)^{-1}\cdot(-1)^{-1}=(p\tau-1)\bmod p\tau \iff \nonumber\\
&&i\cdot(p\tau-i)^{-1}+1=0\bmod p\tau.\label{eq:slope-i}
\end{eqnarray}
We obtain that
\begin{align*}
&\bar{s}_{0}(x)+x^{(p\tau-i)^{-1}}\bar{s}_{1}(x)+\cdots+
x^{(p\tau-1)(p\tau-i)^{-1}}\bar{s}_{p\tau-1}(x)\\
=&\sum_{j=0}^{p\tau-1}s_{-ij,j}x^j+x^{(p\tau-i)^{-1}}(\sum_{j=0}^{p\tau-1}s_{1-ij,j}x^j)+
\cdots+x^{(p\tau-1)(p\tau-i)^{-1}}(\sum_{j=0}^{p\tau-1}s_{p\tau-1-ij,j}x^j)\\
=&(s_{0,0}+s_{-i,1}x+\cdots+s_{i,p\tau-1}x^{p\tau-1})+
x^{(p\tau-i)^{-1}}(s_{1,0}+s_{1-i,1}x+\cdots+s_{1+i,p\tau-1}x^{p\tau-1})\\
&+\cdots+x^{(p\tau-1)(p\tau-i)^{-1}}(s_{-1,0}+s_{-1-i,1}x+\cdots+s_{-1+i,p\tau-1}x^{p\tau-1})\\
=&(s_{0,0}+s_{1+i(p\tau-i)^{-1},1}+\cdots+s_{-1-i(p\tau-i)^{-1},p\tau-1})+\\
&x(s_{-i,1}+s_{1-i+i(p\tau-i)^{-1},1-(p\tau-1)^{-1}}+\cdots+s_{-1-i-i(p\tau-i)^{-1},1+(p\tau-i)^{-1}})\\
&+\cdots+x^{p\tau-1}(s_{i,p\tau-1}+s_{1+i+i(p\tau-i)^{-1},-(p\tau-1)^{-1}-1}+\cdots+s_{-1+i-i(p\tau-i)^{-1},-1+(p\tau-i)^{-1}})\\
=&(s_{0,0}+s_{0,1}+\cdots+s_{0,p\tau-1})+x(s_{-i,1}+s_{-i,1-(p\tau-1)^{-1}}+\cdots+s_{-i,1+(p\tau-i)^{-1}})\\
&+\cdots+x^{p\tau-1}(s_{i,p\tau-1}+s_{i,-(p\tau-1)^{-1}-1}+\cdots+s_{i,-1+(p\tau-i)^{-1}})\\
=&0,
\end{align*}
where the second to last equation above comes from Eq. \eqref{eq:slope-i}
and the last equation comes from Eq. \eqref{eq:thm-rec-sum0}.

Similarly, according to row $\ell$ of Eq. \eqref{eq:matrixH2},
the summation of the $k+r$ symbols in the line of slope $\ell$ is zero, where
$\ell=1,2,\ldots,r$.
If $\gcd(i-\ell,p\tau)=\gcd(i,p^{\nu+1})=1$, we have $\gcd(i-\ell,p)=\gcd(i-\ell,p\tau-i+\ell)=1$ and
obtain
\[
(i-\ell)\cdot(p\tau-i+\ell)^{-1}+1=0\bmod p\tau.
\]
We can further obtain that
\begin{equation}
\bar{s}_{0}(x)+x^{(p\tau-i+\ell)^{-1}}\bar{s}_{1}(x)+\cdots+
x^{(p\tau-1)(p\tau-i+\ell)^{-1}}\bar{s}_{p\tau-1}(x)=0.
\label{eq:rec-key}
\end{equation}
However, if $\gcd(i-\ell,p\tau)>1$ which means that $i-\ell$ is a multiple of $p$, then
$p\tau-i+\ell$ is a multiple of $p$, $(p\tau-i+\ell)^{-1}\bmod p\tau$ does not exist
and Eq. \eqref{eq:rec-key} does not hold.
Let $\{g_1,g_2,\ldots,g_{|\bar{\mathcal{I}}|}\}$ be the subset of
$\{i,i-1,\ldots,i-(r-1)\}$ such that $\gcd(i-\ell,p\tau)=1$ for $\ell=0,1,\ldots,r-1$,
which is defined before Theorem \ref{thm:reco-tau1}.
According to Eq. \eqref{eq:rec-key}, we can obtain the following polynomials
\begin{equation}
\begin{bmatrix}
x^{e_1(p\tau-g_1)^{-1}} & x^{e_2(p\tau-g_1)^{-1}} & \cdots & x^{e_t(p\tau-g_1)^{-1}}\\
x^{e_1(p\tau-g_2)^{-1}} & x^{e_2(p\tau-g_2)^{-1}} & \cdots & x^{e_t(p\tau-g_2)^{-1}}\\
\vdots & \vdots & \ddots & \vdots \\
x^{e_1(p\tau-g_{|\bar{\mathcal{I}}|})^{-1}} & x^{e_2(p\tau-g_{|\bar{\mathcal{I}}|})^{-1}} &
\cdots & x^{e_t(p\tau-g_{|\bar{\mathcal{I}}|})^{-1}}\\
\end{bmatrix}\cdot \begin{bmatrix}
\bar{s}_{e_1}(x)\\
\bar{s}_{e_2}(x)\\
\vdots\\
\bar{s}_{e_t}(x)\\
\end{bmatrix}.
\label{eq:rec-key1}
\end{equation}

Recalling that the summation of the $\tau$ symbols in rows $\ell,\ell+\tau,\ldots,\ell+(p-1)\tau$
in each column is zero according to Eq. \eqref{eq:tau-eqn},
we obtain the following polynomials
\begin{equation}
\begin{bmatrix}
\mathbf{v}_0\\
\mathbf{v}_1\\
\vdots\\
\mathbf{v}_{\eta-1}\\
\end{bmatrix}
\begin{bmatrix}
\bar{s}_{e_1}(x)\\
\bar{s}_{e_2}(x)\\
\vdots\\
\bar{s}_{e_t}(x)\\
\end{bmatrix}.
\label{eq:rec-row}
\end{equation}
Together with $|\bar{\mathcal{I}}|$ polynomials in Eq. \eqref{eq:rec-key1}, we have
the following $t=\eta+|\bar{\mathcal{I}}|$ polynomials
\[
\mathbf{G}_{t\times t}\cdot
\begin{bmatrix}
\bar{s}_{e_1}(x)\\
\bar{s}_{e_2}(x)\\
\vdots\\
\bar{s}_{e_t}(x)\\
\end{bmatrix},
\]
where $\mathbf{G}_{t\times t}$ is given in Eq. \eqref{eq:thm-rec}.

Recall that the last $\tau$ coefficients of $\bar{s}_{e_j}(x)$ are all zero which are known
for all $j=1,2,\ldots,t$.
If the $t\times t$ matrix $\mathbf{G}_{t\times t}$ is invertible in
$\mathbb{F}_q[x]/(1+x^{\tau}+\cdots+x^{(p-1)\tau})$, then we can compute the unique solution
$\bar{s}_{e_1}(x),\bar{s}_{e_2}(x),\ldots,\bar{s}_{e_t}(x)$ by Lemma \ref{lm:linear-sys-cong}.
Therefore, the condition for recovering the $t$ erased lines of slope $i$
is proved.
\end{proof}

If $\gcd(i-\ell,p\tau)=1$ for all $\ell=1,2,\ldots,r$, then we have
$|\bar{\mathcal{I}}|=r$. Since $\eta\leq \tau$, it is possible to recover
at most $\tau+r$ erased lines under some condition by Theorem \ref{thm:reco-tau1}.

Theorem 23-25 in \cite{hou2021generalization} is a special case
of Theorem \ref{thm:reco-tau1} with $r=2$ or some specific erased lines.

In codes $\textsf{GBR}(p,\tau,k,r,q,1)$, there is no local parity symbol in each column
and we can't obtain the polynomials in Eq. \eqref{eq:rec-row}. With a
proof similar to the one
of Theorem \ref{thm:reco-tau1}, we can obtain the recovery condition of
$\textsf{GBR}(p,\tau,k,r,q,1)$ as follows.

\begin{theorem}
Let $\tau=p^{\nu}$ and $k+r\leq (p-1)\tau$, where $\nu\geq 0$.
The codes $\textsf{GBR}(p,\tau,k,r,q,1)$
can recover $t$ erased lines $e_1,e_2,\ldots,e_t$ of slope $i$, if the following matrix
\[
\begin{bmatrix}
x^{e_1(p\tau-g_1)^{-1}} & x^{e_2(p\tau-g_1)^{-1}} & \cdots & x^{e_t(p\tau-g_1)^{-1}}\\
x^{e_1(p\tau-g_2)^{-1}} & x^{e_2(p\tau-g_2)^{-1}} & \cdots & x^{e_t(p\tau-g_2)^{-1}}\\
\vdots & \vdots & \ddots & \vdots \\
x^{e_1(p\tau-g_{|\bar{\mathcal{I}}|})^{-1}} & x^{e_2(p\tau-g_{|\bar{\mathcal{I}}|})^{-1}} &
\cdots & x^{e_t(p\tau-g_{|\bar{\mathcal{I}}|})^{-1}}\\
\end{bmatrix}
\]
is invertible over $\mathbb{F}_{q}[x]/(1+x^{\tau}+\ldots+x^{(p-1)\tau})$,
where $t=|\bar{\mathcal{I}}|$ and $\ell^{-1}\ell=1\bmod p\tau$
for integers $1\leq \ell,\ell^{-1}\leq p\tau-1$.
\label{thm:reco-tau2}
\end{theorem}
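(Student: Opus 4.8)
The plan is to mirror the proof of Theorem~\ref{thm:reco-tau1}, simply dropping the ingredients that came from the local parity symbols. As in that proof, I would fix the slope $i$ with $0\leq i\leq p\tau-1$, fix the $t$ erased lines $e_1<e_2<\cdots<e_t$, and represent the $k+r$ symbols in line $\ell$ of slope $i$ by the polynomial $\bar{s}_{\ell}(x)=\sum_{j=0}^{k+r-1}s_{\ell-ij,j}x^{j}\in\mathbb{F}_{q}[x]/(h(x))$, padding with zeros so that $\bar{s}_{\ell}(x)$ has degree at most $k+r-1<(p-1)\tau=\deg h(x)$. The essential point is that the last $\tau$ coefficients (degrees $k+r$ through $p\tau-1$, after embedding into $\mathcal{R}_{p\tau}$, or more precisely the high-degree coefficients forced to be zero by $k+r\le(p-1)\tau$) of each $\bar{s}_{e_j}(x)$ are known to be zero, which supplies the $\tau$ known coefficients required by Lemma~\ref{lm:linear-sys-cong}.

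Next I would carry over verbatim the slope-manipulation identity: for each $\ell\in\{0,1,\ldots,r-1\}$ with $\gcd(i-\ell,p\tau)=1$, one has $(i-\ell)(p\tau-i+\ell)^{-1}+1\equiv 0 \bmod p\tau$, and hence, using the parity constraint coming from row $\ell+1$ of the parity-check matrix $\mathbf{H}_{r\times(k+r)}$ in Eq.~\eqref{eq:matrixH2} (i.e.\ that the $k+r$ symbols along each line of slope $\ell$ sum to zero), one obtains $\sum_{m=0}^{p\tau-1}x^{m(p\tau-i+\ell)^{-1}}\bar{s}_{m}(x)=0$ in $\mathbb{F}_{q}[x]/(h(x))$. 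Collecting these identities over the indices $g_1,\ldots,g_{|\bar{\mathcal{I}}|}$ and separating erased from non-erased lines gives a linear system whose coefficient matrix on the erased unknowns $\bar{s}_{e_1}(x),\ldots,\bar{s}_{e_t}(x)$ is exactly the $t\times t$ matrix in the statement, with $t=|\bar{\mathcal{I}}|$; the right-hand side is a known vector built from the surviving lines. Note that here there is no analogue of Eq.~\eqref{eq:rec-row}, since $\textsf{GBR}$ codes have no local parity symbols, so $\eta$ plays no role and the matrix loses its top $\eta$ rows compared with $\mathbf{G}_{t\times t}$.

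Finally, invertibility of this matrix over $\mathbb{F}_{q}[x]/(1+x^{\tau}+\cdots+x^{(p-1)\tau})=\mathbb{F}_q[x]/(h(x))$ (recall $g(x)=1$, so $h(x)=1+x^{\tau}+\cdots+x^{(p-1)\tau}$) together with the $\tau$ known (zero) high-degree coefficients of each $\bar{s}_{e_j}(x)$ lets me invoke Lemma~\ref{lm:linear-sys-cong} to conclude that the system has a unique solution, which recovers all $t$ erased lines. One subtlety worth checking explicitly is that $\tau=p^{\nu}$ guarantees $\gcd(g(x),h(x))=\gcd(1+x^{\tau},h(x))=1$ (by \cite[Theorem 10]{hou2021generalization}), so that the ambient ring behaves as in Theorem~\ref{thm:linear-sys-cong} and the determinant being a unit in $\mathbb{F}_q[x]/(h(x))$ is the correct invertibility condition.

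The main obstacle, as in Theorem~\ref{thm:reco-tau1}, is not the algebra of the linear system but the bookkeeping of indices: verifying that the exponents $e_h(p\tau-g_s)^{-1}$ are well-defined (i.e.\ $\gcd(i-\ell,p\tau)=1$ exactly for $\ell$ corresponding to $\bar{\mathcal{I}}$, which uses $\tau=p^{\nu}$ so that $p\tau=p^{\nu+1}$ is a prime power and $\gcd(i-\ell,p\tau)=1\iff p\nmid(i-\ell)$), and that the separation into erased and non-erased columns is done consistently across all $|\bar{\mathcal{I}}|$ rows. Since all of these steps are identical to the corresponding steps in the proof of Theorem~\ref{thm:reco-tau1} with the $\mathbf{v}_g$-rows removed, I would present the argument by reduction: state the polynomial representation, derive Eq.~\eqref{eq:rec-key} in the $\mathbb{F}_q[x]/(h(x))$ setting, assemble the $|\bar{\mathcal{I}}|\times t$ system, observe $t=|\bar{\mathcal{I}}|$, and apply Lemma~\ref{lm:linear-sys-cong}.
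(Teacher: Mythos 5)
Your proposal is correct and is essentially the paper's own argument: the paper proves this theorem only by remarking that one repeats the proof of Theorem~\ref{thm:reco-tau1} with the local-parity rows of Eq.~\eqref{eq:rec-row} deleted, which is precisely the reduction you carry out (keeping Eq.~\eqref{eq:rec-key}, the $t=|\bar{\mathcal{I}}|$ square system, the $\tau$ known zero high-degree coefficients, and Lemma~\ref{lm:linear-sys-cong}). Your added checks — that $\tau=p^{\nu}$ makes $\gcd(i-\ell,p\tau)=1$ equivalent to $p\nmid(i-\ell)$ and that $\gcd(1+x^{\tau},h(x))=1$ — are consistent with, and slightly more explicit than, what the paper records.
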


When $\tau=1$, $\textsf{GBR}(p,\tau=1,k,r,q,1)$ is reduced to BR codes.
If $i\in\{0,1,\ldots,r-1\}$, then $i-\ell$ is not a multiple of $p$ for
$\ell\in\{0,1,\ldots,r-1\}\setminus \{i\}$, we have that $\mathcal{I}=\{0\}$, $\bar{\mathcal{I}}=\{i,i-1,\ldots,1,-1,-2,\ldots, i-(r-1)\}$ and $|\bar{\mathcal{I}}|=r-1$.
We can obtain the following $r-1$ polynomials
\begin{align*}
\begin{bmatrix}
x^{e_1(p-g_1)^{-1}} & x^{e_2(p-g_1)^{-1}} & \cdots & x^{e_{r-1}(p-g_1)^{-1}}\\
x^{e_1(p-g_2)^{-1}} & x^{e_2(p-g_2)^{-1}} & \cdots & x^{e_{r-1}(p-g_2)^{-1}}\\
\vdots & \vdots & \ddots & \vdots \\
x^{e_1(p-g_{r-1})^{-1}} & x^{e_2(p-g_{r-1})^{-1}} &
\cdots & x^{e_t(p\tau-g_{r-1})^{-1}}\\
\end{bmatrix}\cdot \begin{bmatrix}
\bar{s}_{e_1}(x)\\
\bar{s}_{e_2}(x)\\
\vdots\\
\bar{s}_{e_{r-1}}(x)\\
\end{bmatrix}.
\end{align*}
Since $i\in\{0,1,\ldots,r-1\}$, according to row
$i$ of the parity-check matrix, the erased $r-1$ polynomials are in $\mathcal{C}_{p}(1,\tau=1,q,d)$
which is isomorphic to $\mathbb{F}_q[x]/(1+x+\ldots+x^{p-1})$.
Therefore, we can recover the erased $r-1$ lines of slope $i$ for $\textsf{GBR}(p,\tau=1,k,r,q,1)$
with $k+r=p$ if the determinant of the above $(r-1)\times(r-1)$ matrix is invertible in
$\mathbb{F}_q[x]/(1+x+\ldots+x^{p-1})$.

If $i\in\{r,r+1,\ldots,p-1\}$, we have that $\mathcal{I}=\emptyset$, $\bar{\mathcal{I}}=\{i,i-1,\ldots,i-(r-1)\}$ and $|\bar{\mathcal{I}}|=r$.
By Theorem \ref{thm:reco-tau2}, we can directly obtain that
BR codes with $k+r\leq p-1$ can recover $r$ erased lines $e_1,e_2,\ldots,e_t$ of slope $i$, if the following matrix
\[
\begin{bmatrix}
x^{e_1(p-i)^{-1}} & x^{e_2(p-i)^{-1}} & \cdots & x^{e_{r}(p-i)^{-1}}\\
x^{e_1(p-(i-1))^{-1}} & x^{e_2(p-(i-1))^{-1}} & \cdots & x^{e_{r}(p-(i-1))^{-1}}\\
\vdots & \vdots & \ddots & \vdots \\
x^{e_1(p-(i-r+1))^{-1}} & x^{e_2(p-(i-r+1))^{-1}} &
\cdots & x^{e_t(p-(i-r+1))^{-1}}\\
\end{bmatrix}
\]
is invertible over $\mathbb{F}_{q}[x]/(1+x+\ldots+x^{p-1})$.

When $\tau$ is not a power of $p$, the recovery condition of erased lines of slope $i$
is a subject of future research.

\section{Conclusions}
\label{sec:con}
In this paper, we presented the $(n,k)$ recoverable condition of GEBR codes for general $g(x)$.
We also showed a sufficient condition for recovering some erased lines of any slope in
GEBR codes when $\tau$ is a power of $p$.
Moreover, we presented the construction of GBR codes and showed the $(n,k)$ recoverable condition
and recovery condition for GBR codes.
The necessary and sufficient $(n,k)$ recoverable condition of GEBR codes for general
$g(x)$ will be a subject of future research, as well as
the recovery condition of erased lines of a slope with $\tau$ not being a power
of $p$.
%\appendices

\bibliographystyle{IEEEtran}
%\bibliography{references}
% Generated by IEEEtran.bst, version: 1.13 (2008/09/30)

\end{document}